\documentclass{article} 
\usepackage{iclr2026_conference,times}
\iclrfinalcopy 

\usepackage{amsmath,amsfonts,bm}

\def\eqref#1{equation~\ref{#1}}

\def\1{\bm{1}}

\DeclareMathAlphabet{\mathsfit}{\encodingdefault}{\sfdefault}{m}{sl}
\SetMathAlphabet{\mathsfit}{bold}{\encodingdefault}{\sfdefault}{bx}{n}

\usepackage{hyperref}
\usepackage{url}
\usepackage{amsmath,amssymb,amsthm}
\usepackage{booktabs}
\usepackage{graphicx}
\usepackage[table]{xcolor}

\newtheorem{theorem}{Theorem}
\newtheorem{proposition}{Proposition}
\newtheorem{lemma}{Lemma}

\newtheorem{definition}{Definition}

\newcommand{\Match}{\textsc{Match}}

\title{Higher-Order Token Interactions via Quantum Attention}

\author{Jian Xu$^{1,2}$, Chao Li$^{2}$, Delu Zeng$^{3}$, John Paisley$^{4}$, Qibin Zhao$^{2}$ \\[2pt]
$^1$RIKEN iTHEMS \quad $^2$RIKEN AIP \quad $^3$South China University of Technology \quad
$^4$Columbia University \\[2pt]
\texttt{jian.xu@riken.jp}}

\begin{document}

\maketitle
\lhead{}\rhead{\thepage} 

\begin{abstract}
Standard dot-product self-attention computes, in a single layer, only \emph{pairwise}
(order-2) interactions between tokens; representing a generic order-$k$ interaction is
known to require either super-quadratic resources in one layer or composition across depth.
We introduce \textbf{Quantum Higher-Order Attention (QHA)}, a shallow, hardware-realizable
quantum attention head that, via data re-uploading and an all-to-all non-Clifford entangler,
synthesizes order-$k$ token interactions \emph{inside} the circuit and exposes them through a
local single-qubit read-out. We prove (i) an \emph{expressivity separation}: any single
standard self-attention layer with embedding dimension $m$, $H$ heads and $p$-bit precision
satisfying $mHp=o(N/\log\log N)$ cannot represent the order-$k$ correlation family that one
QHA head represents with circuit depth $O(\log k)$ ($O(k)$ two-qubit gates); and (ii) a
\emph{trainability guarantee} for its local-design instantiation: with a local read-out and
$O(\log n)$ depth the gradient variance is $\Omega(1/\mathrm{poly}(n))$ (no barren plateau),
which we confirm empirically --- while being explicit that the more expressive all-to-all
instantiation we benchmark is trained empirically and shows exponentially decaying gradients.
Empirically, at a $6.5\times$ \emph{smaller} parameter budget,
QHA generalizes hidden-subset parity of every order $k\le6$ from \emph{disjoint} inputs,
whereas the (larger) classical attention head collapses past order~2; consistent with
theory, the size of the advantage tracks the target's Fourier degree --- largest for parity
and shrinking when low-order structure is present. A trained order-3 head runs on an IBM
Heron processor at full accuracy under a disclosed shot budget. As an application, QHA serves
as a compact high-order interaction detector across three domains --- genetic epistasis,
learning-parity-with-noise, and graph triangle detection --- reaching the noise ceiling at
the smallest parameter budget where field-standard linear methods fail. We are explicit about
the trainability--simulability trade-off: our claim is an \emph{inductive-bias} and
\emph{representation} separation against classical \emph{attention}, not an exponential
speedup.
\end{abstract}

\section{Introduction}
The self-attention layer at the heart of the Transformer \citep{vaswani2017attention} scores a pair
of tokens by a dot product $q_i^\top k_j$ and mixes their values accordingly. This makes a
single attention layer a fundamentally \emph{pairwise}, order-2 operation: it reads out
second-order interactions between tokens, and reaching genuinely higher-order
($k$-token) interactions requires either composing many layers or paying explicit
$O(n^k)$ cost. \citet{sanford2023representational} make this precise --- a single softmax attention layer
needs width/heads/precision polynomial in the sequence length to compute even a single
order-3 matching predicate --- and \citet{kozachinskiy2025strassen} show that standard attention
provably cannot solve order-3 compositional tasks without super-quadratic resources.
Yet many problems are intrinsically high-order: $k$-ary matching and function composition
\citep{sanford2023representational,kozachinskiy2025strassen}, parity-like correlations that underlie algorithmic
reasoning \citep{hahn2020theoretical,bhattamishra2023simplicity}, and $k$-body interactions in physics, chemistry,
and biology.

Quantum circuits are a natural home for high-order interactions: a parity over a $k$-subset,
the canonical order-$k$ Boolean function, is computed by a constant-depth circuit, and the
reachable polynomial degree of a data-re-uploading circuit grows with the number of
re-uploads \citep{schuld2021effect}. This raises a concrete question that, surprisingly,
the quantum-machine-learning literature has not answered: \emph{can a shallow quantum
attention head represent and learn order-$k$ token interactions that a single classical
attention layer of equal or larger budget provably cannot, and does the advantage survive on
real hardware?}

Existing quantum Transformers do not settle this. Surveying the field \citep{zhang2025survey},
PQC-based models either place the quantum circuit only on the $Q/K/V$ projections while
keeping the score and softmax classical (e.g.\ QSANN, QASA, QClusformer), or replace the
softmax/normalization operator (QDSFormer, \citealp{born2026quantum}), or swap the entire
token mixer (Quixer, \citealp{khatri2024quixer}); QLA-based models accelerate inference of a
\emph{pre-trained} classical model under fault-tolerant assumptions. None proves an
expressivity separation against classical attention, most are evaluated on small data
without parameter-matched baselines, and a recent systematic study by the same community
concludes that bolting a variational circuit into attention yields ``only marginal gains
\dots while significantly increasing parameter count'' \citep{chen2026quantum}. The
methodological bar set by the field's own critics \citep{bowles2024better} --- matched budgets,
honest baselines, disclosed shots, and a clear statement of what is and is not classically
simulable --- is rarely met.

We meet it. We introduce \textbf{Quantum Higher-Order Attention (QHA)}, a shallow quantum
attention head whose single-qubit read-out realizes order-$k$ token interactions inside the
circuit, and we evaluate it under a strict matched-budget, disjoint-split protocol with a
real-hardware demonstration.

\paragraph{Contributions.}
\begin{itemize}
\item \textbf{Mechanism.} QHA, a shallow quantum attention head whose \emph{local}
read-out realizes order-$k$ token interactions inside the circuit (Sec.~\ref{sec:method}).
\item \textbf{Theorem 1 (separation).} A single classical self-attention layer cannot
represent the order-$k$ correlation family at sub-quadratic budget; one QHA head can with
$n$ qubits, circuit depth $O(\log k)$ and $O(k)$ two-qubit gates (Sec.~\ref{sec:theory}).
\item \textbf{Theorem 2 (trainability).} The local-design QHA instantiation (local read-out,
$O(\log n)$ depth) is barren-plateau-free; we confirm it empirically and are explicit that the
all-to-all instantiation we benchmark is trained empirically (Sec.~\ref{sec:theory},~\ref{sec:more}).
\item \textbf{Evidence.} Matched-budget learnability gap on order-$k$ tasks + IBM hardware
validation; honest treatment of dequantization (Sec.~\ref{sec:exp},~\ref{sec:dequant}).
\end{itemize}

\section{Related Work}
\paragraph{Quantum Transformers and quantum attention.}
The survey of \citet{zhang2025survey} organizes the field into PQC-based and QLA-based
paradigms. Among PQC-based models, \emph{Quantum Vision Transformers} \citep{cherrat2024quantum}
use Hamming-weight-preserving (orthogonal) circuits for the value projection and attention
score, including a compound-matrix variant that is genuinely higher-order but is instantiated
only at second order and proven only to have an asymptotic parameter/runtime advantage, not
an expressivity separation. \emph{QDSFormer} \citep{born2026quantum} replaces the softmax with
a variational circuit that outputs doubly stochastic matrices --- a quantum inductive bias
with no classical parametric analogue --- and is the most rigorous prior model (NeurIPS
spotlight), but its gains vanish at scale and it makes no separation or hardware-training
claim. \emph{Quixer} \citep{khatri2024quixer} replaces the entire token mixer with an LCU+QSVT
construction and is candidly ``far from SOTA'' on language modeling. A large family of models
(QSANN, QMSAN, QASA, QPSAN, QClusformer) place a small circuit on the $Q/K/V$ projections or
the scoring function while leaving softmax classical; they report accuracy gains on small,
often binary tasks and rarely match parameters or run on hardware. Tellingly, a systematic
study from within the community, \emph{Do Quantum Transformers Help?} \citep{chen2026quantum},
finds that explicit quantum self-attention adds parameters for marginal gain and recommends
attention-free variational circuits instead. Our work differs on all three axes that this
literature leaves open: a \emph{proven} order-$k$ separation against classical attention, a
\emph{trainability guarantee}, and a \emph{matched-budget hardware} validation.

\paragraph{Higher-order attention in classical ML.}
Reaching order-$k$ token interactions classically is expensive. \citet{kozachinskiy2025strassen}
prove single-layer standard attention is pairwise and that triple interactions need
super-quadratic resources; \emph{Higher-Order Transformers} \citep{omranpour2024higher} make order-$k$
explicit at $O\!\big((\prod_i N_i)^2\big)$ cost, tamed only by Kronecker/low-rank
factorization. The kernel view of attention \citep{tsai2019transformer} and linear-attention
approximations such as Performer \citep{choromanskirethinking} realize softmax as an
infinite-degree kernel truncated to a finite random-feature budget, i.e.\ a bounded
effective order. QHA instead obtains exact order-$k$ in one shallow head with polynomial
parameters.

\paragraph{Expressivity and limitations of attention.}
Our lower bounds build on \citet{sanford2023representational} (communication-complexity separations for
order-3 matching), \citet{sanford2024transformers} (logarithmic-depth requirements for $k$-hop
composition), \citet{hahn2020theoretical} (hard-attention cannot model parity), \citet{edelman2022inductive}
(generic order-$s$ functions require $2^{\Omega(s)}$ weight norm), and \citet{bhattamishra2023simplicity}
(the low-degree/low-sensitivity learning bias of trained Transformers).

\paragraph{Trainability and dequantization.}
Variational circuits can suffer barren plateaus \citep{mcclean2018barren}; shallow circuits with
\emph{local} observables provably avoid them \citep{cerezo2021cost}, as do structured ans\"atze
such as QCNNs \citep{pesah2021absence} and small dynamical-Lie-algebra circuits \citep{ragone2024lie}.
These same structural conditions, however, often imply classical simulability
\citep{cerezo2025does}, and dequantization \citep{tang2019quantum} removes speedups whenever the
relevant linear algebra is low-rank with sample-and-query access. We take these results
seriously and frame QHA as an inductive-bias separation, not a speedup claim
(Sec.~\ref{sec:dequant}); we also follow the benchmarking discipline urged by
\citet{bowles2024better}.

\section{Quantum Higher-Order Attention}\label{sec:method}

\begin{figure}[t]\centering
\includegraphics[width=0.96\textwidth]{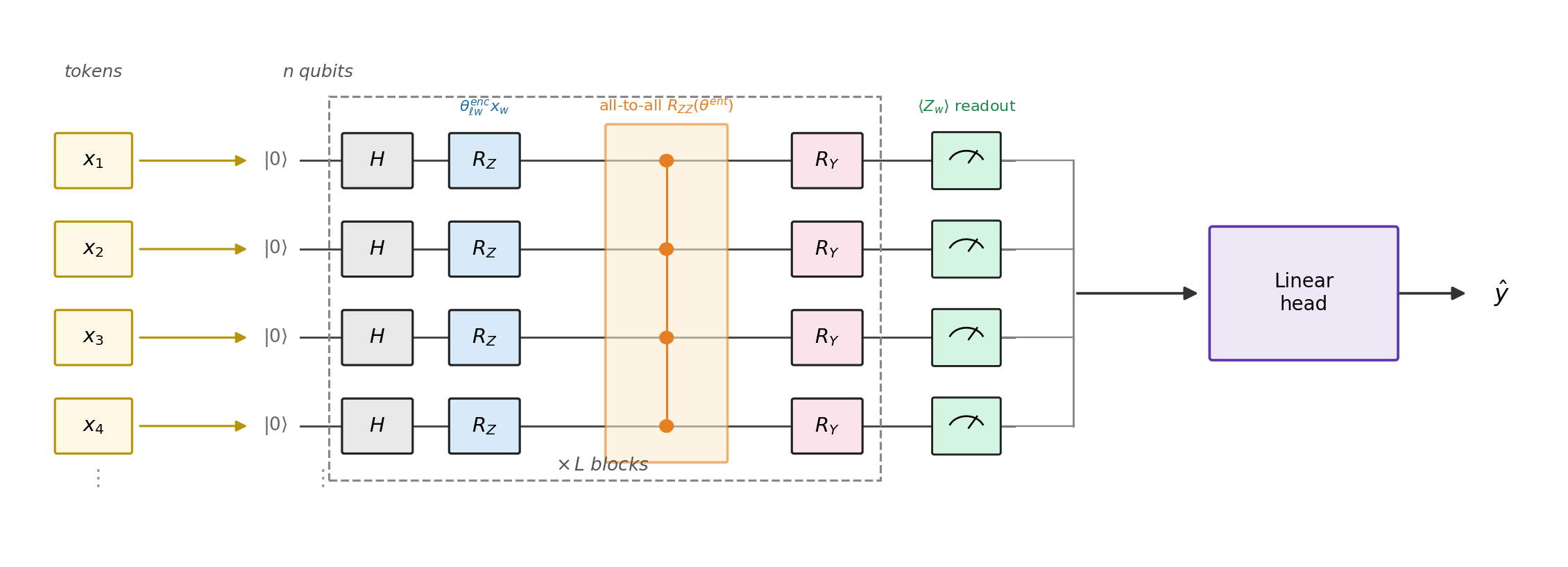}
\caption{QHA head: per-qubit data re-uploading encoder, an all-to-all non-Clifford
$R_{ZZ}$ entangler, and a \emph{local} single-qubit read-out feeding a linear head; $L$
blocks. The order-$k$ interaction is synthesised inside the circuit and exposed on a
single-qubit marginal.}
\label{fig:arch}
\end{figure}

\subsection{Setup and notation}\label{sec:method-setup}
Let $X=(x_1,\dots,x_n)\in\{-1,+1\}^n$ be a length-$n$ token sequence; we use $\pm1$ features
for clarity, but continuous features enter identically through the encoder below. Write
$[n]=\{1,\dots,n\}$, and for $S\subseteq[n]$ let $\chi_S(X)=\prod_{i\in S}x_i$ be the
corresponding \emph{parity monomial}. Every $f:\{-1,+1\}^n\!\to\!\mathbb{R}$ has a unique
multilinear (Fourier) expansion $f(X)=\sum_{S\subseteq[n]}\hat f(S)\,\chi_S(X)$, and we call
$\deg f=\max\{|S|:\hat f(S)\neq0\}$ its \emph{interaction order}: the size of the largest tuple
of tokens whose \emph{joint} product the function genuinely depends on.

\begin{definition}[Order-$k$ interaction]\label{def:order}
$f$ has interaction order $k$ if $\deg f=k$. The canonical example is $\chi_S$ with $|S|=k$,
which depends on all $k$ tokens jointly and cannot be written as a function of any $k-1$ of
them; order-$1$ is additive (per-token), order-$2$ is pairwise, and so on.
\end{definition}

A QHA head acts on $n$ qubits (one per token), Hilbert space $(\mathbb{C}^2)^{\otimes n}$,
initialised in $\lvert\mathbf{0}\rangle=\lvert0\rangle^{\otimes n}$. It uses the single-qubit
rotations $R_Z(\alpha)=e^{-i\alpha Z/2}$ and $R_Y(\beta)=e^{-i\beta Y/2}$, the Hadamard $H$, and
the two-qubit Ising coupling $R_{ZZ}^{(i,j)}(\gamma)=e^{-i\gamma Z_iZ_j/2}$; $X_w,Y_w,Z_w$
denote Paulis acting on qubit $w$.

\subsection{The QHA head}\label{sec:method-head}
The head is a depth-$L$ circuit $U(X;\theta)=B_L(X;\theta)\cdots B_1(X;\theta)$ built from $L$
identical blocks. Block $\ell$ is the ordered composition of three stages,
$B_\ell(X;\theta)=W_\ell\,V_\ell\,E_\ell(X)$:
\begin{align}
\textbf{(i) data re-uploading encoder:}\quad & E_\ell(X)=\bigotimes_{w=1}^n
   R_Z\!\big(\theta^{\mathrm{enc}}_{\ell,w}\,x_w\big)\,H, \label{eq:enc}\\[2pt]
\textbf{(ii) all-to-all entangler:}\quad & V_\ell(\theta)=\!\!\prod_{1\le i<j\le n}\!\!
   \exp\!\Big(\!-\tfrac{i}{2}\,\theta^{\mathrm{ent}}_{\ell,(i,j)}\,Z_iZ_j\Big), \label{eq:ent}\\[2pt]
\textbf{(iii) local mixing:}\quad & W_\ell(\theta)=\bigotimes_{w=1}^n
   R_Y\!\big(\theta^{\mathrm{rot}}_{\ell,w}\big). \label{eq:rot}
\end{align}
The encoder \eqref{eq:enc} writes each token feature into a single-qubit rotation angle,
re-uploaded in every block; the entangler \eqref{eq:ent} couples \emph{every} pair of
token-qubits through a learnable Ising phase; the local layer \eqref{eq:rot} re-orients each
qubit. Applying the circuit to the initial state gives $\lvert\psi(X)\rangle=
U(X;\theta)\lvert\mathbf{0}\rangle$.

\paragraph{Local read-out and head.} We measure \emph{only} single-qubit expectations, giving a
read-out vector $z(X)\in[-1,1]^n$,
\begin{equation}
z_w(X)=\langle\psi(X)\rvert\,Z_w\,\lvert\psi(X)\rangle,\qquad w\in[n],
\label{eq:readout}
\end{equation}
and a linear classifier on top, $\mathrm{logits}(X)=A\,z(X)+b$ with $A\in\mathbb{R}^{C\times n}$.
No multi-qubit or joint measurement is taken. The trainable parameters are the circuit phases
$\theta=\{\theta^{\mathrm{enc}},\theta^{\mathrm{ent}},\theta^{\mathrm{rot}}\}$ and the head
$(A,b)$. The continuous, non-Clifford two-qubit phases and all-to-all coupling produce
volume-law entanglement, placing the head outside the Clifford, matchgate/free-fermion, and
low-bond-dimension regimes that admit efficient classical simulation (Sec.~\ref{sec:dequant}).

\subsection{How a single-qubit read-out becomes high-order attention}\label{sec:method-mech}
A degree-$1$ observable $Z_w$ seems unable to express an order-$k$ interaction. The resolution
is that the interaction is synthesised \emph{inside} the unitary and then exposed on the
marginal. Pass to the Heisenberg picture:
\begin{equation}
z_w(X)=\big\langle\mathbf{0}\big\rvert\,U(X;\theta)^\dagger\,Z_w\,U(X;\theta)\,\big\lvert
\mathbf{0}\big\rangle=\big\langle\mathbf{0}\big\rvert\,O_w(X)\,\big\lvert\mathbf{0}\big\rangle,
\qquad O_w(X)=U^\dagger Z_w\,U .
\label{eq:heis}
\end{equation}
Two facts fix the order. \textbf{(a) The entangler spreads operator weight.} Conjugating a
single-qubit Pauli by one Ising coupling turns it into a two-qubit string,
\begin{equation}
R_{ZZ}^{(i,j)}(\gamma)^\dagger\,X_i\,R_{ZZ}^{(i,j)}(\gamma)
   =\cos\gamma\;X_i-\sin\gamma\;Y_iZ_j,
\label{eq:spread}
\end{equation}
and the interleaved $H,R_Y$ layers convert $Z$ components into $X,Y$ components that
\eqref{eq:spread} can again spread. Iterating across the all-to-all graph, a single-qubit seed
$Z_w$ grows into $Z$-strings of weight up to $k$ within $L=O(\log k)$ blocks (the balanced
$R_{ZZ}$ tree of Thm.~\ref{thm:sep}/App.~\ref{app:proofs}). \textbf{(b) The encoder injects the
tokens as monomial coefficients.} For $x_i\in\{-1,+1\}$ every trigonometric function of the
encoder angle is affine in the token, $\cos(\theta^{\mathrm{enc}}x_i)=\cos\theta^{\mathrm{enc}}$
and $\sin(\theta^{\mathrm{enc}}x_i)=x_i\sin\theta^{\mathrm{enc}}$, so $x_i$ enters the
coefficients of $O_w(X)$ only through products $\prod_{i\in S}x_i$. Finally only diagonal
strings survive the vacuum expectation, $\langle\mathbf{0}\rvert Z_S\lvert\mathbf{0}\rangle=1$
while $\langle\mathbf{0}\rvert P\lvert\mathbf{0}\rangle=0$ for any Pauli string $P$ containing an
$X$ or $Y$. Combining (a)–(b), the marginal is \emph{exactly a multilinear polynomial in the
tokens},
\begin{equation}
\boxed{\;z_w(X)=\sum_{S\subseteq[n]}\hat c_{w,S}(\theta)\,\chi_S(X)\;}
\label{eq:multilin}
\end{equation}
whose coefficients $\hat c_{w,S}(\theta)$ are set by the trained phases and whose reachable
order $\deg z_w$ grows with depth $L$ and entangler connectivity. The linear head outputs
$\sum_w a_w z_w(X)$, a learned combination of these monomial features; choosing $L$ so that
$\deg z_w\ge k$ lets the head place weight on an order-$k$ feature $\chi_S$, $|S|=k$ --- which is
exactly what training does on the order-$k$ tasks of Sec.~\ref{sec:exp}.

\paragraph{Why this is \emph{attention}, and why it is \emph{higher}-order.} The learnable Ising
phases $\theta^{\mathrm{ent}}_{\ell,(i,j)}$ are pairwise, data-conditioned couplings between
token-qubits $i$ and $j$ --- the quantum analogue of an attention score matrix: $R_{ZZ}$ imposes
a relative phase that depends jointly on tokens $i$ and $j$, the role played by
$\mathrm{softmax}(q_i\!\cdot\!k_j)$ in a classical head. The difference is in how the pairwise
couplings are \emph{aggregated}. A single softmax layer combines them \emph{linearly} over
values, $\sum_j\mathrm{softmax}(q_i\!\cdot\!k_j)\,v_j$, so its output is a degree-$\le2$ function
of the token features and can represent only order-$\le2$ interactions (made precise in
Lemma~\ref{lem:matchk}). QHA instead composes the couplings through \emph{non-commuting}
unitaries: by Eq.~\eqref{eq:spread} the Ising phases do not simply add but generate, after
$O(\log k)$ blocks, genuine $k$-body terms $Z_{i_1}\!\cdots Z_{i_k}$, which Eq.~\eqref{eq:multilin}
surfaces as the order-$k$ monomial $\chi_S$ on a single-qubit marginal. In one shallow head QHA
therefore \emph{attends to $k$-tuples of tokens jointly}, whereas a classical attention layer
attends only to pairs --- the precise sense in which QHA performs \emph{higher-order} attention.
The same local read-out that enables this is also what keeps the head trainable
(Thm.~\ref{thm:bp}).

\paragraph{Parameters and budget.} The head has $L\,(2n+\binom{n}{2})$ circuit parameters
($2n$ from the encoder and local layers, $\binom{n}{2}$ from the all-to-all entangler) plus the
$O(n)$-size linear head --- $296$ at $n{=}12,L{=}3$, already $6.5\times$ \emph{smaller} than the
standard one-layer attention baseline ($1922$). Our strictly budget-matched comparison is
therefore the controlled suite of Table~\ref{tab:baselines}, in which the attention-paradigm
baselines (polynomial-kernel attention, $347$; HOT-style, $506$; tiny MLP, $302$) sit within a
small factor of QHA's $296$; the separation holds against \emph{both} the larger standard head
and these matched controls. Fig.~\ref{fig:circuit} shows an explicit small instance.

\begin{figure}[t]\centering
\includegraphics[width=0.95\textwidth]{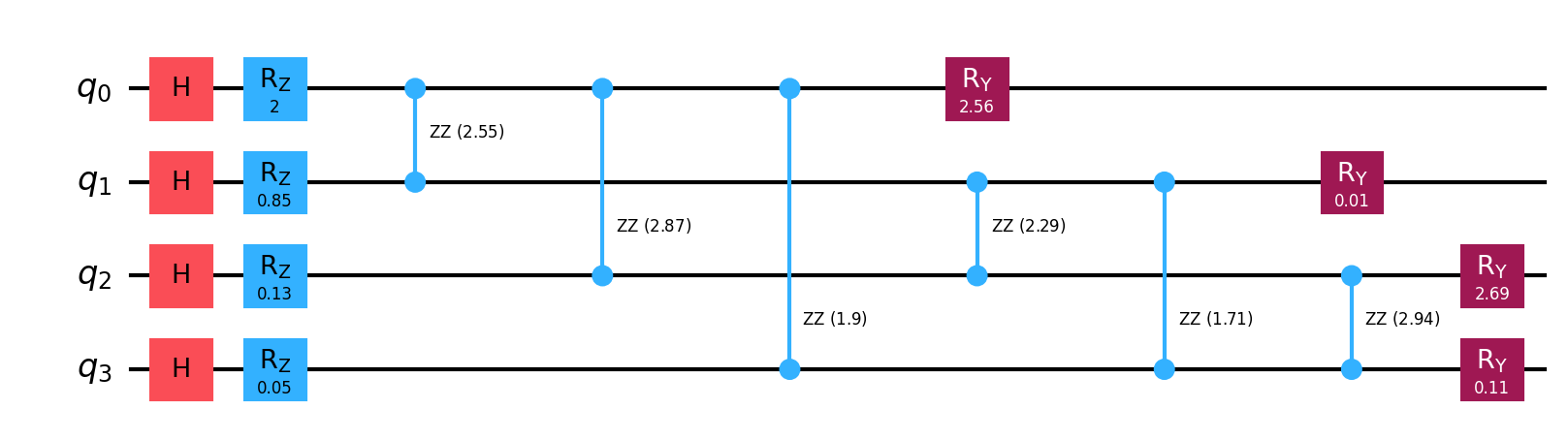}
\caption{An explicit QHA circuit ($n{=}4$, $L{=}1$): Hadamards and $R_Z$ data re-uploading,
all $\binom{4}{2}{=}6$ $R_{ZZ}$ couplings (all-to-all), a trainable $R_Y$ layer, and a
single-qubit $\langle Z\rangle$ read-out (not shown).}
\label{fig:circuit}
\end{figure}

\section{Theory}\label{sec:theory}
We develop a theory of \emph{high-order attention} and tie it to a single quantum
construction. Our contributions are: (i) an order-$k$ separation showing a single attention
layer cannot represent order-$k$ token interactions for \emph{any} $k\ge3$
(Thm.~\ref{thm:sep}, via Lemma~\ref{lem:matchk}), strictly generalizing the order-3 picture
known classically; (ii) an \emph{explicit} shallow QHA circuit that realizes exactly this
missing order-$k$ structure on a local read-out (the construction underlying
Thm.~\ref{thm:sep}); (iii) a fixed-budget \emph{generalization} separation matching our
experimental protocol (Prop.~\ref{prop:norm}); and (iv) a trainability guarantee
(Thm.~\ref{thm:bp}). Communication complexity, norm-based capacity, and barren-plateau
analysis enter only as off-the-shelf tools; the novelty is the order-$k$ attention
separation, its quantum realization, and the representation, generalization, and trainability
properties they establish for the QHA architecture. Full proofs are in App.~\ref{app:proofs}.

\paragraph{Setup.} A single softmax self-attention layer scores token pairs by a degree-2
form and is therefore \emph{order-2}: classically it solves the pairwise matching task
$\Match_2$ with embedding dimension $m{=}3$, but the order-3 task $\Match_3$ already costs
$mHp=\Omega(N/\log\log N)$ \citep{sanford2023representational}. We first lift this barrier to all orders.

\begin{lemma}[Order-$k$ barrier for single-layer attention]\label{lem:matchk}
For the order-$k$ matching family $\Match_k(X)_i=\mathbb{1}\{\exists\, j_1,\dots,j_{k-1}:\,
x_i+\sum_t x_{j_t}\equiv 0\ (\mathrm{mod}\ M)\}$, every single softmax self-attention layer
with budget $mHp = o(N/\log\log N)$ fails to compute $\Match_k$, for all fixed $k\ge3$.
\end{lemma}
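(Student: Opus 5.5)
We reduce to the order-3 case. Sanford et al.\ prove that $\Match_3$ needs $mHp=\Omega(N/\log\log N)$, and we embed $\Match_3$ into $\Match_k$ so that a single-layer solver for the latter yields one for the former at essentially the same budget. Sanford et al.\ use the same communication-complexity scheme: split the input between Alice and Bob, and simulate one attention layer with $O(mHp\log\log N)$ bits, since each softmax numerator and denominator can be sent at $p$-bit precision per head and coordinate. That scheme stays available as a fallback if the black-box embedding turns out to be awkward.

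\textbf{Step 1 (padding reduction).} Given an instance $X=(x_1,\dots,x_N)$ of $\Match_3$ with modulus $M$, build an instance $X'$ of $\Match_k$ of length $N'=N+(k-3)$ with modulus $M'$. Append $k-3$ \emph{anchor} tokens equal to $0$. Also lift the original tokens to $x_i'=x_i+cM$ for a large offset $c$, with $M'$ chosen relative to $M$ and $c$.

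The aim is that any zero-sum $(k)$-tuple under $M'$ must consist of exactly one query, two original tokens, and the $k-3$ anchors. Such a tuple is then a witness for $\Match_3(X)_i$, and conversely every $\Match_3$ witness extends to a $\Match_k$ witness by appending the anchors.

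A simpler alternative is to use the anchors as zeros and forbid reuse of indices. In that reading $j_1,\dots,j_{k-1}$ need not be distinct, so a single zero token can fill all $k-3$ extra slots, and one appended $0$ suffices.

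\textbf{Step 2 (the obstacle: spurious witnesses).} Because the definition allows arbitrary $j_t$, including repeats, a $\Match_k$ witness might use original tokens several times instead of anchors. Examples are $x_i+x_{j_1}+x_{j_2}+x_{j_3}\equiv0$ with no zeros involved.

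To block these we encode in a residue "tag" layer. Set $M'=(k)\cdot M\cdot T$ and map each original token to $x_i'=x_i\cdot T' + 1$ in a mixed-radix sense, with the $+1$ counting tokens. Then a sum $\equiv0\pmod{M'}$ forces the token count coordinate to equal a prescribed value, namely exactly $3$ non-anchor tokens.

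Concretely, give each original token a "count" digit $1$ in a high radix $R>k$, and give the anchor count digit $0$. Zero total modulo $M'=M\cdot R$ then requires the count of non-anchor tokens to be $\equiv 0 \pmod R$. We therefore shift by a constant so that exactly $3$ are required, for example by giving the query tokens count digit $R-2$. Verifying that this count-digit bookkeeping is airtight, including carries between radices, is where I expect the real work. I would handle carries by choosing $R$ and per-digit bounds so that no carry occurs with at most $k$ summands.

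\textbf{Step 3 (conclude).} The encoding $X\mapsto X'$ is token-wise, appends $O(k)=O(1)$ tokens, and increases bit length by $O(\log k)$. It can therefore be absorbed into the layer's input embedding without changing $m$ or $H$, and it changes $p$ by at most a constant factor.

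Hence a layer computing $\Match_k$ with $mHp=o(N'/\log\log N')=o(N/\log\log N)$ would compute $\Match_3$ within the same budget. This contradicts the Sanford et al.\ bound and proves the lemma for every fixed $k\ge3$.
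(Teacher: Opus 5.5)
\textbf{The count-digit gadget of Step~2 does not work, so there is a genuine gap.}

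Your route differs from the paper's. The paper reduces $(k-1)$-party set-disjointness directly to $\Match_k$, one block per player, and loses a $2^{O(k)}$ factor from the multiparty bound. You instead pad $\Match_3$ into $\Match_k$ and stay in the two-party setting of Sanford et al. That plan is sound, but the gadget of Step~2, which carries the whole argument, admits spurious witnesses. This is a structural problem, not a choice of $R$.

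\emph{The query index can be reused.} The definition allows $j_t=i$. Your query carries tag $-2$ and originals carry $+1$, so each reuse of the query index makes room for two more originals. With $r$ reuses and $o$ original slots the tag condition is $o=2(1+r)$. The pair $(r,o)=(1,4)$ fits into the $k-1$ slots whenever $k\ge 6$, and it passes for every $R$ because $2(R-2)+4=2R$. Take a disjointness instance of the natural form: query $q$, Alice's element $j$ encoded as $j$, Bob's as $-q-j$. This witness has value sum $2q+a+b-(q+c)-(q+d)=a+b-c-d$. That vanishes for disjoint $A\supseteq\{1,4\}$ and $B\supseteq\{2,3\}$, so the padded instance reports an intersection that does not exist.

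\emph{Further problems with the construction.}
\begin{itemize}
\item A token-wise map cannot give ``query tokens'' tag $R-2$ and ``original tokens'' tag $1$. Every position is the query for its own output and a partner for all the others. At best your map reduces a single designated output. That rules out using the $\Match_3$ theorem as a black box, and it also loses $\Match_3$ witnesses that use $j_t=i$.
\item The carry plan cannot work as stated. A zero sum necessarily wraps the tag digit, since the intended tags sum to exactly $R$. If a mod-$M$ value digit sits below the tag digit, the carry it passes up depends on the data.
\item In Step~3, appended tokens cannot be absorbed into the embedding of a length-$N$ layer. They add query-dependent terms $e^{\langle q_i,\kappa\rangle}$ and $e^{\langle q_i,\kappa\rangle}v$ to every head's numerator and normalizer, where $\kappa,v$ are the anchor's key and value.
\end{itemize}

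\textbf{How to repair it.} Put the compensating tag on the anchor rather than on the query, and decouple tag from value with the Chinese remainder theorem.
\begin{itemize}
\item Take a prime $R>k$ with $R\nmid M$ and set $M'=MR$.
\item Send every token to $(x_i,\,k-3)\in\mathbb{Z}_M\times\mathbb{Z}_R\cong\mathbb{Z}_{M'}$.
\item Append a single anchor $(0,-3)$; one suffices because repeats are allowed.
\end{itemize}
Suppose $o$ of the $k-1$ slots point to original tokens, counting reuses of $i$ since $x_i$ is itself an original, and the rest point to the anchor. The tag sum is $(1+o)(k-3)-3(k-1-o)=k(o-2)$, which vanishes mod $R$ iff $o=2$. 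The value condition is then exactly $x_i+x_j+x_{j'}\equiv 0\pmod M$. Hence $\Match_k(X')_i=\Match_3(X)_i$ for every $i\le N$, under a token-wise map that the embedding absorbs. The one public anchor is handled by running Sanford et al.'s two-party simulation on $X'$ with the anchor given to one player, not by the embedding.

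\textbf{What each route buys.} Repaired this way, your route gives $mHp=\Omega(N/\log\log N)$ with no $2^{-O(k)}$ loss, and it needs only the existing two-party simulation. The paper's route needs a $(k-1)$-party simulation of the layer and a multiparty disjointness bound. The paper's own Step~2 handles the same spurious-witness issue only through an informal ``shared mask'', so this step deserves care on either route.
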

\noindent The proof (App.~\ref{app:proofs}) reduces $(k{-}1)$-party set-disjointness to
$\Match_k$ and bounds the communication a budget-$mHp$ layer can carry; the order-3 result of
\citet{sanford2023representational} is the base case of our reduction.

\begin{theorem}[High-order separation and its quantum realization]\label{thm:sep}
A single QHA head on $n$ qubits, with depth $O(\log k)$, realizes every order-$k$ interaction
$\{\prod_{i\in S}x_i:|S|\le k\}$ on a single-qubit read-out (explicit construction in
App.~\ref{app:proofs}). No single classical attention layer of budget $mHp=o(N/\log\log N)$
represents this family for $k\ge3$ (Lemma~\ref{lem:matchk}). Thus QHA supplies, within the
attention paradigm, exactly the order-$k$ structure a classical attention layer lacks.
\end{theorem}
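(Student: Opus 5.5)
The proof has two halves. The lower bound is exactly Lemma~\ref{lem:matchk}, so it only needs a linking step. The quantum half needs an explicit circuit that puts $\chi_S(X)=\prod_{i\in S}x_i$ on one qubit marginal, for any $S$ with $|S|\le k$, at depth $O(\log k)$ and with $O(k)$ two-qubit gates.

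\textbf{Step 1 (encoding each token as a single-qubit state).} Apply $H$ and then $R_Z(\pi x_i/2)$ to each qubit, and follow with a fixed basis change. For $x_i\in\{-1,+1\}$ the encoder angle is $\pm\pi/2$, so each qubit lands in an eigenstate of a Pauli $P_i$ with eigenvalue proportional to $x_i$. After a fixed $R_Y$ rotation, which $W_\ell$ provides, this is $\lvert 0\rangle$ or $\lvert 1\rangle$ according to $x_i$, with $\langle Z_i\rangle=x_i$. This uses the affine identity of Sec.~\ref{sec:method-mech}, $\sin(\theta x_i)=x_i\sin\theta$. Set $\theta^{\mathrm{enc}}=0$ on qubits outside $S$, so they stay in $\lvert 0\rangle$ up to a fixed rotation.

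\textbf{Step 2 (balanced parity tree).} Aggregate $\prod_{i\in S}Z_i$ onto a root qubit $r$ using a binary tree over $S$. Each internal node merges two children through the standard identity
\[
\mathrm{CNOT}=(I\otimes H)\,\mathrm{CZ}\,(I\otimes H),
\]
where $\mathrm{CZ}=e^{i\pi/4}\,R_{ZZ}(\pi/2)\,(R_Z(-\pi/2)\otimes R_Z(-\pi/2))$ up to a global phase. This realizes a CNOT from one child into the other with a single $R_{ZZ}$ plus single-qubit gates. The $R_{ZZ}$ comes from $V_\ell$, with all unused couplings set to $\theta^{\mathrm{ent}}=0$. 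The single-qubit gates come from the $H$, $R_Z$ and $R_Y$ stages; a $\pm\pi/2$ $R_Z$ can be absorbed because the encoder angle is trainable and we can take $\theta^{\mathrm{enc}}$ non-data-dependent on later blocks only if allowed.

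\textbf{Step 3 (counting and reading out).} Each tree level is one block $B_\ell$, so depth is $L=\lceil\log_2 k\rceil+O(1)$ and the circuit uses $k-1$ two-qubit gates. In the Heisenberg picture of Eq.~\eqref{eq:heis}, $O_r=U^\dagger Z_rU$ becomes $\prod_{i\in S}Z_i$ on the encoded product state. Hence $z_r(X)=\chi_S(X)$ exactly, which is Eq.~\eqref{eq:multilin} with a single nonzero coefficient. Sets with $|S|<k$ use a subtree, and the linear head $A$ selects $r$. This gives realization of every member of the family.

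\textbf{Step 4 (separation).} Suppose one classical layer with $mHp=o(N/\log\log N)$ represented the whole family $\{\chi_S:|S|\le k\}$. Since $\Match_k$ is built from order-$k$ joint correlations, it could then compute $\Match_k$, contradicting Lemma~\ref{lem:matchk}. Strictly, I will state the theorem's second sentence as the assertion of Lemma~\ref{lem:matchk} transferred to the matching family, and cite the lemma directly.

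\textbf{Main obstacle.} The hard part is Step 2: fitting the CNOT gadgets into the rigid $W_\ell V_\ell E_\ell$ block order. Each block re-uploads data and applies $H$ on every qubit. I would first try making later encoders data-independent by choosing angles so that $R_Z(\theta x)$ is trivial, namely $\theta\in 2\pi\mathbb{Z}$. If that fails, I would use a constant number of extra blocks per level to undo the stray $H$ layers, which still gives depth $O(\log k)$.
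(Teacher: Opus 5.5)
\textbf{There is a genuine gap in Steps~1--2.} Your overall route is the paper's. You build a balanced tree of CNOTs onto a support qubit $r\in S$, each CNOT compiled from one $R_{ZZ}(\pi/2)$ plus single-qubit gates. This gives $U^\dagger Z_rU=\prod_{i\in S}Z_i$ with $k-1$ two-qubit gates at depth $\lceil\log_2 k\rceil$, and Lemma~\ref{lem:matchk} then covers the classical half. The gap lies in what you add on top: forcing this witness into the rigid $W_\ell V_\ell E_\ell$ block with the re-uploading encoder. As written, that embedding fails, for the following reasons.
\begin{itemize}
\item \emph{Step~1.} Setting $\theta^{\mathrm{enc}}=\pi/2$ produces $\lvert\pm i\rangle$, the $Y$-eigenstates with eigenvalue $x_i$. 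But $R_Y(\beta)$ commutes with $Y$ and leaves them unchanged. More generally, the block's single-qubit gates preserve the Bloch-vector form $(a,\,x_ic,\,b)$ with $a,b,c$ independent of $x_i$: $H$ sends $(r_x,r_y,r_z)\mapsto(r_z,-r_y,r_x)$, $R_Y$ only mixes $r_x$ and $r_z$, and $R_Z(\theta x_i)$ keeps $r_x$ data-independent because $x_i^2=1$. So no local basis change ever yields $\langle Z_i\rangle=x_i$.
\item \emph{Step~2.} Your CZ identity needs a data-independent $R_Z(-\pi/2)$. However, $R_Z(\theta x_w)$ is the same gate for both signs of $x_w$ only when $\theta\in\pi\mathbb{Z}$, which gives $I$ or $Z$.
\item \emph{No workaround from extra blocks.} Conjugation by every data-independent gate a block offers ($H$, $Z$, $R_Y(\beta)$, $R_{ZZ}(\gamma)$) maps $Y^{\otimes n}$ to $\pm Y^{\otimes n}$. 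A CNOT instead sends $Y_cY_t\mapsto -X_cZ_t$. Hence no circuit built from data-independent block settings, however many blocks, implements your gadget.
\item \emph{Misdirected fallback.} Your fallback addresses the stray $H$ layers, but these are harmless: $R_Z(\pi x)\,H\,R_Y(\pi/2)=-ix\,I$ is just a global phase.
\end{itemize}

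\textbf{How the paper avoids this, and how you could repair it.} The paper never meets this obstacle because it proves only a class-level statement. Its witness is any circuit over $\{H,R_Z,R_{ZZ},R_Y\}$ with fixed-angle $R_Z$ gates and computational-basis loading, and it explicitly disclaims gate-for-gate identity with the trained re-uploading encoder. If you adopt that scope and delete Step~1, your Steps~2--3 are exactly its proof.

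If you want the stronger in-block statement, you must abandon CNOTs and computational-basis loading. One working alternative:
\begin{itemize}
\item Keep the $Y$-encoding of block~1 and aim for the observable $\pm\prod_{i\in S}Y_i$.
\item Work backward from $Z_r$, which $W_L$ turns into $X_r$. The coupling $R_{ZZ}^{(a,f)}(\pi/2)$ sends $X_a\mapsto -Y_aZ_f$ and $Y_a\mapsto X_aZ_f$. The local step $R_Z(\theta x)HR_Y(\beta)$ with $\theta\in\{0,\pi\}$ can, per qubit, either fix or swap $X$ and $Z$, while sending $Y\mapsto\pm Y$.
\item Pairing each active qubit with a fresh one then doubles the support per block.
\item Add one more layer coupling the leftover $X$-type qubits to a helper $h\notin S$, followed by a trivial $V_1$. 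Any leftover $Z_h$ becomes $X_h$ under $E_2$ and is read against $h$'s $\lvert+\rangle$.
\end{itemize}
The result is $\pm\chi_S$ with $\lceil\log_2 k\rceil+2$ blocks when $k<n$.

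Finally, your Step~4 inference from representing $\{\chi_S\}$ to computing $\Match_k$ is unjustified. The paper gives no such reduction either and simply cites Lemma~\ref{lem:matchk}, so your fallback of citing the lemma directly is exactly its move.
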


\noindent\emph{Scope --- read Thm.~\ref{thm:sep} as a representation statement.} Three things
are deliberately kept distinct (see App.~\ref{app:proofs} for the precise version). (1) The
construction is a witness for the QHA \emph{architecture class}: it loads inputs in the
computational basis and is a circuit over QHA's gate set, establishing \emph{representability}.
(2) The head we benchmark uses an $R_Z$ \emph{re-uploading} encoder; that this trained instance
\emph{learns} order-$k$ is shown empirically (Sec.~\ref{sec:exp}), not as a corollary, and we do
not claim the two encoders are gate-for-gate identical. (3) The classical bound
(Lemma~\ref{lem:matchk}) is asymptotic and does \emph{not} bite at $n{=}12$; the finite-scale
collapse we observe is consistent with it but also reflects attention's low-degree learning bias
\citep{bhattamishra2023simplicity}. We therefore claim a \emph{proved representational}
separation (architecture class vs.\ attention) together with a \emph{measured learning}
separation (trained QHA vs.\ trained attention at matched budget) --- not that the trained
re-uploading head is provably order-$k$, and not an exponential speedup.

\begin{proposition}[Fixed-budget generalization separation]\label{prop:norm}
A fixed-budget bounded-norm attention head cannot \emph{generalize} a generic order-$k$ rule
(hidden-subset parity or a random $k$-junta) from a strict subset of inputs as $k$ grows:
representing it forces weight norm $2^{\Omega(k)}$ \citep{edelman2022inductive}, and the
$\widetilde O(\text{norm})$ capacity then bars small-sample generalization. A QHA head instead
attains interaction order $k$ with polynomially many parameters --- the explicit construction
of Thm.~\ref{thm:sep} uses circuit depth $O(\log k)$ and $O(k)$ two-qubit gates, with \emph{no}
$2^{k}$ norm blow-up --- so it generalizes the rule at a fixed budget.
\end{proposition}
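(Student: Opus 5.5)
The proof has two halves: a classical lower bound and a quantum upper bound. I would prove them separately and then combine.

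\emph{Classical half.} I would invoke the weight-norm lower bound of \citet{edelman2022inductive} directly. Any bounded-depth attention head whose output sign agrees with $\chi_S$, $|S|=k$, on all of $\{-1,+1\}^n$ must have weight norm $2^{\Omega(k)}$. This follows because a generic order-$k$ rule has all its Fourier mass at level $k$, while a low-norm head's output is well approximated by low-degree terms.

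Next I would apply the norm-based Rademacher/covering bound from the same work. For a head with norm $B$ and $s$ parameters, the capacity is $\widetilde O(B\sqrt{s\log n/m})$ on $m$ samples. Fix the budget and the norm bound $B$ independently of $k$. The class of functions the head can output then has uniform-convergence error that is small only on functions of low effective degree. Any function in this class achieving small empirical loss on a strict subset of inputs must therefore, once $2^{\Omega(k)}>B$, have correlation with $\chi_S$ bounded away from $1$ on the held-out inputs.

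To make ``cannot generalize'' precise, I would use the standard argument that the class of $\binom{n}{k}$ parities is pairwise orthogonal. A bounded-capacity class can correlate nontrivially with only a few of them (a Parseval/SQ-dimension count). So for most hidden $S$, the held-out error stays near $1/2$. The random $k$-junta case reduces to parity, since a random junta has $\Omega(1)$ Fourier weight on its top-degree coefficient with high probability.

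\emph{Quantum half.} I would reuse the explicit construction of Thm.~\ref{thm:sep}. The inputs are loaded in the computational basis, and a balanced binary tree of CNOT-equivalent $R_{ZZ}(\pi/2)$ couplings, conjugated by $H$ and $R_Y(\pm\pi/2)$ layers from QHA's gate set, XORs the $k$ qubits of $S$ into a target qubit $w$. This makes $z_w(X)=\chi_S(X)$ exactly. The tree has depth $\lceil\log_2 k\rceil$ and uses $k-1$ two-qubit gates, and all other entangler phases are set to $0$.

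The key point is that all angles lie in $[0,2\pi)$ and the linear head is $A=e_w^\top$, $b=0$. The total parameter norm is therefore $O(\sqrt{k})$ (or $O(1)$ per parameter), with no $2^k$ blow-up.

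For generalization at a fixed budget, I would note that this hypothesis class, restricted to the tree-shaped family indexed by $S$, has size at most $\binom{n}{k}$. A finite-class (Occam) bound then gives generalization from $O(k\log n)$ samples of any strict subset. A sample-consistent hypothesis of this form recovers $S$, because distinct parities disagree on half of the inputs, so a random subset of size $O(k\log n)$ separates them with high probability.

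\emph{Main obstacle.} The delicate step is the classical half. Converting the Edelman et al.\ norm lower bound, stated for sparse-variable creation by bounded-norm Transformers, into a statement about generalization from a \emph{strict subset} of inputs needs care. Representation on the training subset alone does not force large norm, since interpolation on a small set could be cheap.

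I would first try to handle this by taking the training subset to be a uniformly random set of size $m=\mathrm{poly}(n)$ and using the capacity bound to show that small training loss plus norm $\le B$ implies small population correlation with low-degree functions only. I would then use orthogonality of parities. If this fails, I would weaken the claim to an SQ-style statement: gradient-based training of a fixed-budget head needs $n^{\Omega(k)}$ queries, citing \citet{edelman2022inductive}.
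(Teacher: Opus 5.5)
Your two-part plan follows the paper's proof. On the classical side you use the $2^{\Omega(k)}$ norm bound attributed to \citet{edelman2022inductive} plus $\widetilde O(\text{norm})$ capacity; on the quantum side you use the tree of Thm.~\ref{thm:sep} with bounded angles. You also see something the paper glosses over: a capacity bound that becomes vacuous is not a lower bound on test error, so failure has to come from non-approximability. However, two steps you add do not hold.

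\textbf{The junta reduction.} It is not true that a generic order-$k$ rule has all its Fourier mass at level $k$. Nor does a random junta have $\Omega(1)$ weight on its top coefficient. For a uniformly random balanced $T:\{-1,1\}^k\to\{-1,1\}$, Parseval plus symmetry give $\mathbb{E}[\hat T(U)^2]=1/(2^k-1)$ for every nonempty $U$, so $\hat T([k])$ carries exponentially small weight. The mass is spread roughly binomially over levels: $\Omega(1)$ of it sits at levels $\ge k/2-O(\sqrt k)$, and for small $k$ a large share sits at levels $\le 2$. This is why the paper does not reduce juntas to parity but uses its Fourier-degree refinement, in which the barrier bites in proportion to high-degree mass. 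It is also why the classical head keeps up on juntas through $k\le4$ (Fig.~\ref{fig:sep}, right), the opposite of what a reduction to parity would predict. Your orthogonality count has to run over the high-level parities $\chi_U$ with $U\subseteq S$, $|U|\ge d$, not over $\chi_S$ alone.

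\textbf{The norm premise.} Your justification, ``a low-norm head's output is well approximated by low-degree terms,'' is not a valid fact, and it fails exactly for hidden-subset parity. $\chi_S$ is symmetric in its support. A single head with scores $\Theta(\log n)$ on the positions in $S$ and $0$ elsewhere outputs $\tfrac1k\sum_{i\in S}x_i$ up to error $o(1/k)$. A width-$O(k)$ ReLU MLP with weights $O(k)$ then maps this Hamming weight to $\chi_S$, for total norm $\mathrm{poly}(k)\log n$. So parity is one of the symmetric special cases that the paper's proof itself calls cheap. The paper inherits this problem from its citation, but your argument cannot rest on it.

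What survives is your Parseval/covering count. Suppose norm-$B$ heads admit an $L_2(\text{uniform})$ cover at scale $\Theta(\epsilon)$ of log-size $\widetilde O(B^2/\epsilon^2)$. Each cover element correlates at level $\Omega(\epsilon)$ with only $O(1/\epsilon^2)$ of the $\binom nk$ orthogonal parities. Hence for most $S$ no head correlates with $\chi_S$ once $k$ exceeds roughly $B^2/\epsilon^2$, up to logarithmic factors. That suffices for ``fixed $B$, growing $k$,'' but the threshold is polynomial, not $2^{\Omega(k)}$. It also needs a distribution-level cover rather than the empirical one in Edelman et al.

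\textbf{The quantum side.} Your Occam bound separates nothing. By the construction just given, classical heads computing $\chi_S$, one per $S$, also form a family of size $\binom nk$, so the same $O(k\log n)$-sample bound applies to them verbatim. It also only certifies brute-force ERM over a basis-loaded discrete subfamily, not the trained re-uploading head.
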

\noindent This is precisely our experimental regime: at a fixed budget the classical head
collapses for $k\ge3$ while QHA, with fewer parameters, generalizes from disjoint data
(Sec.~\ref{sec:exp}).

\begin{theorem}[Trainability]\label{thm:bp}
A QHA head with a local single-qubit read-out and a shallow ($O(\log n)$-depth) local-design
entangler has cost-gradient variance $\mathrm{Var}[\partial_\theta C]\ge\Omega(1/\mathrm{poly}(n))$;
it is free of barren plateaus.
\end{theorem}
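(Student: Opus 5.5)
The plan is to reduce Theorem~\ref{thm:bp} to the shallow-circuit/local-cost analysis of \citet{cerezo2021cost}, after checking that the QHA head with a local-design entangler fits its hypotheses. I take the cost to be $C(\theta)=\mathbb{E}_X\,\ell(A z(X)+b)$. Since $\partial_\theta C$ is a bounded linear combination of terms $\partial_\theta z_w(X)$, with weights given by the head $A$ and the derivative of the loss, it suffices to lower-bound $\mathrm{Var}_\theta[\partial_\theta z_w]$ for a single local observable $Z_w$ and a fixed input $X$. I would then average over $X$ (and use the fact that the loss derivative is bounded away from zero on a constant fraction of inputs), which loses at most a polynomial factor.

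\textbf{Step 1 (light cone).} In the local-design instantiation each block consists of the encoder, a layer of $R_{ZZ}$ gates with bounded locality (a brickwork pattern instead of all-to-all), and the $R_Y$ layer. Fix a parameter $\theta_\mu$ in block $\ell$ and work in the Heisenberg picture, as in Eq.~\eqref{eq:heis}. The backward light cone of $Z_w$ through the blocks $\ell+1,\dots,L$ has width $O(L)$. With $L=O(\log n)$ it therefore contains only $O(\log n)$ qubits. Consequently, $\partial_\mu z_w$ depends only on the reduced circuit restricted to this cone, and its effective Hilbert-space dimension is $2^{O(\log n)}=\mathrm{poly}(n)$.

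\textbf{Step 2 (design averaging).} Following the local $2$-design assumption, I treat each two-qubit block, with its parameters drawn from the design distribution, as forming a local $2$-design. Write the circuit as $U=U_+ e^{-i\theta_\mu G/2}U_-$ and integrate the blocks of $U_-$ and $U_+$ one at a time using second-moment Weingarten formulas. The argument of \citet{cerezo2021cost} then gives
\[
\mathrm{Var}[\partial_\mu z_w]\ \ge\ \Omega\big(2^{-c\,L}\big)\cdot \mathrm{const},
\]
where $c$ is a constant that depends only on the gate locality. The product-state input $\lvert\mathbf 0\rangle$, together with the data-dependent $R_Z$ and $H$ layers (single-qubit unitaries that leave the design measure invariant), does not affect this bound: the Hilbert--Schmidt distance between the reduced input and the maximally mixed state on the cone is a constant. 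Substituting $L=O(\log n)$ yields $2^{-O(\log n)}=1/\mathrm{poly}(n)$.

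\textbf{Step 3 (head and loss).} The chain rule gives $\partial_\mu C=\mathbb{E}_X[\sum_w \ell'_w a_w\,\partial_\mu z_w]$. I would condition on $A$ and lower-bound the variance using the largest term. The main obstacle is the cross terms between different $w$ and different inputs $X$. To control them, I would pick the read-out so that only one $w$ has a nonzero $a_w$, which suffices for a lower bound since some choice of $A$ achieves it, or I would use the fact that the design averages of the cross terms are nonnegative. This step, together with justifying that the parameterized two-qubit gates truly form local $2$-designs, is where I expect the argument to be most delicate. If either point fails, I would fall back on stating the design property as a hypothesis of the local-design instantiation, as the theorem statement implicitly does.
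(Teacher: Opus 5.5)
Your Steps~1--2 amount to the paper's proof. The paper does not unpack the light-cone and Weingarten computation; it defines the local-design instantiation as the alternating block ansatz of local $2$-designs on bounded-range qubits at depth $O(\log n)$, takes the cost to be $C(\theta)=\langle Z_w\rangle$ itself, and cites \citet{cerezo2021cost}, Thm~2. The head and the loss never enter.

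The genuine gap is in Step~3, where you aim at the stronger statement for $C=\mathbb{E}_X\,\ell(Az(X)+b)$. None of the three moves there works as stated.

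(i) The weights $\ell'$ are evaluated at $Az(X)+b$, so they depend on $\theta$ and are correlated with $\partial_\mu z_w$. Then $\partial_\mu C$ is not of the form $\sum_i c_i\,\mathrm{Tr}[O_iU\rho U^\dagger]$ with fixed $c_i$. Already for squared loss, its second moment contains terms like $\mathbb{E}_\theta[z_{w_1}z_{w_2}\,\partial_\mu z_{w_3}\,\partial_\mu z_{w_4}]$, i.e.\ fourth moments of $U$, which a local $2$-design hypothesis does not control. Conditioning on $A$ does not freeze $\ell'$, and boundedness of $\ell'$ gives upper bounds, not lower bounds.

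(ii) Averaging over inputs can only lower the variance: $\mathrm{Var}_\theta[\mathbb{E}_X g]\le\mathbb{E}_X\mathrm{Var}_\theta[g]$. The cross-input covariances need not be nonnegative. Already for a single Haar block, the covariance of traceless-observable terms at inputs $X,X'$ is proportional to $\mathrm{Tr}[\rho_X\rho_{X'}]-1/d$, which is negative for nearly orthogonal encodings. It is also multiplied by label-dependent signs through $\ell'$. So ``loses at most a polynomial factor'' is unsupported, and the design-invariance argument you use at fixed $X$ says nothing about these terms.

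(iii) Choosing $A$ with a single nonzero $a_w$ proves the bound only for that particular head, not the given one. It is also counterproductive: with bounded-range gates at depth $O(\log n)$, every parameter outside the backward light cone of that one $Z_w$ --- most of them --- has $\partial_\mu C\equiv 0$.

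The repair is to do what the paper does. State the bound either for the local read-out at a fixed input, or for a fixed linear combination $\sum_w a_w\langle Z_w\rangle$. The latter is covered directly by \citet{cerezo2021cost}, Thm~2, whose lower bound is a sum of $a_w^2$-weighted terms over the read-outs whose light cone contains the differentiated block. No cross-term argument or special choice of $A$ is needed, and with all $a_w\neq0$ every block is covered.

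Your fallback of assuming the design property fixes only the $2$-design issue, not the loss and input-averaging issues. The full-training-loss version should therefore be presented as a separate, unproved claim. The light-cone caveat also applies to the paper's one-line proof: with a single $\langle Z_w\rangle$, the inverse-polynomial bound holds only for parameters inside that cone.
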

\noindent The locality of the QHA read-out is the enabling property; the inverse-polynomial
bound then follows from a standard shallow-local-cost analysis \citep{cerezo2021cost} (full
statement, and an explicit empirical decay-rate study of the expressive all-to-all variant,
in App.~\ref{app:proofs} and Sec.~\ref{sec:more}).

\paragraph{What is proved vs.\ what we measure.} We separate the two cleanly to avoid
overclaiming. \emph{Proved:} an asymptotic ($N\to\infty$), single-layer lower bound for the
order-$k$ \emph{matching} family $\Match_k$ (Lemma~\ref{lem:matchk}); and that one QHA head
represents the order-$k$ monomials (Thm.~\ref{thm:sep}). \emph{Measured:} at the finite scale
$n{=}12$ and a fixed parameter budget, attention-paradigm models --- including explicit
high-order attention --- fail to \emph{learn} hidden-subset parity / $k$-juntas while QHA
succeeds (Sec.~\ref{sec:exp}). The empirical collapse is \emph{consistent with}, but not a
direct corollary of, the asymptotic bound: at $n{=}12$ the lower bound does not yet bite, and
the observed failure is also shaped by the low-degree optimization bias of attention
\citep{bhattamishra2023simplicity}. Prop.~\ref{prop:norm} (a norm/capacity argument) is the bridge that
most directly matches the finite-budget regime. We therefore claim a proved \emph{single-layer
representation} separation plus a \emph{matched-budget empirical learning} separation --- not
that order-$k$ is impossible for all classical models (it is not; Table~\ref{tab:baselines}).
A clarification on the baseline's capacity: our classical head is a \emph{complete} Transformer
block (token embedding, attention mixing, residual/LayerNorm, and a GELU MLP), not a bare
attention map. The lower bound concerns the attention \emph{mixing} layer; equipping the
baseline with an FFN --- a component that can itself express high-order interactions --- only
makes the learning test \emph{harder} for our claim, yet at matched budget the full head still
collapses for $k\ge3$. The separation is thus not an artefact of a deliberately weakened,
FFN-free baseline.

\section{Experiments}\label{sec:exp}

\paragraph{Protocol.} We test the central prediction of Sec.~\ref{sec:theory}: at a matched
parameter budget, QHA learns order-$k$ token interactions that a single classical
self-attention layer cannot. Inputs are length-$n$ sequences $X\in\{-1,+1\}^n$; the label
is an order-$k$ function of a hidden random $k$-subset $S$. We use two task families:
\emph{hidden-subset parity} ($y=\prod_{i\in S}x_i$) and \emph{generic order-$k$ junta}
($y$ is a fixed balanced random Boolean function of the $k$ support bits, so the model
cannot exploit the special structure of parity). Crucially, train and test inputs are
\textbf{disjoint} (we split the Boolean cube), so the task measures \emph{generalization of
the order-$k$ rule}, not memorization. We compare a one-layer multi-head softmax
self-attention head against a QHA head with a strictly smaller parameter count, over
$7$--$8$ seeds (the broader baseline and application studies, which are secondary, use $3$).

\paragraph{Main result: an order-$k$ separation.} Table~\ref{tab:main} and
Fig.~\ref{fig:sep} show the separation. The classical attention head learns order-2
perfectly but degrades sharply and erratically for $k\ge 3$ (high variance: some seeds grab
partial signal, most are at chance); the QHA head stays at (near-)ceiling for every $k$ with
$6.5\times$ \emph{fewer} parameters. Over $7$--$8$ seeds the gap is significant at every
$k\ge3$ by a Wilcoxon signed-rank test (rank-based, so robust to the near-zero variances):
$p=.03,.008,.008,.03$ for $k=3,4,5,6$ (Table~\ref{tab:main}). The test is necessarily
conservative at $7$--$8$ seeds, but the effect is unambiguous: \emph{every} seed individually
favours QHA at every $k\ge3$ (the $p$-values are the smallest the seed count permits), and the
per-$k$ accuracy gap is $\ge0.3$ --- a uniform, not marginal, separation.

\begin{table}[t]\centering
\caption{Test accuracy (disjoint split, $n=12$, mean$\pm$std over $7$--$8$ seeds). Classical
attention is reliable only at order~2 and degrades for $k\ge3$; QHA stays at (near-)ceiling
with far fewer parameters. Significance by Wilcoxon signed-rank test vs.\ QHA (rank-based, so
not inflated by the near-zero variances).}
\label{tab:main}
\begin{tabular}{lccccc}
\toprule
& $k{=}2$ & $k{=}3$ & $k{=}4$ & $k{=}5$ & $k{=}6$ \\
\midrule
Classical attention (1922 params) & 1.00 & 0.71{\scriptsize$\pm$.23} & 0.56{\scriptsize$\pm$.09} & 0.54{\scriptsize$\pm$.07} & 0.51{\scriptsize$\pm$.01} \\
QHA (296 params) & 1.00 & \textbf{1.00} & \textbf{1.00} & \textbf{1.00} & \textbf{0.93} \\
\midrule
Wilcoxon $p$ & --- & $.031$ & $.008$ & $.008$ & $.031$ \\
\bottomrule
\end{tabular}
\end{table}

\begin{figure}[t]\centering
\includegraphics[width=0.92\textwidth]{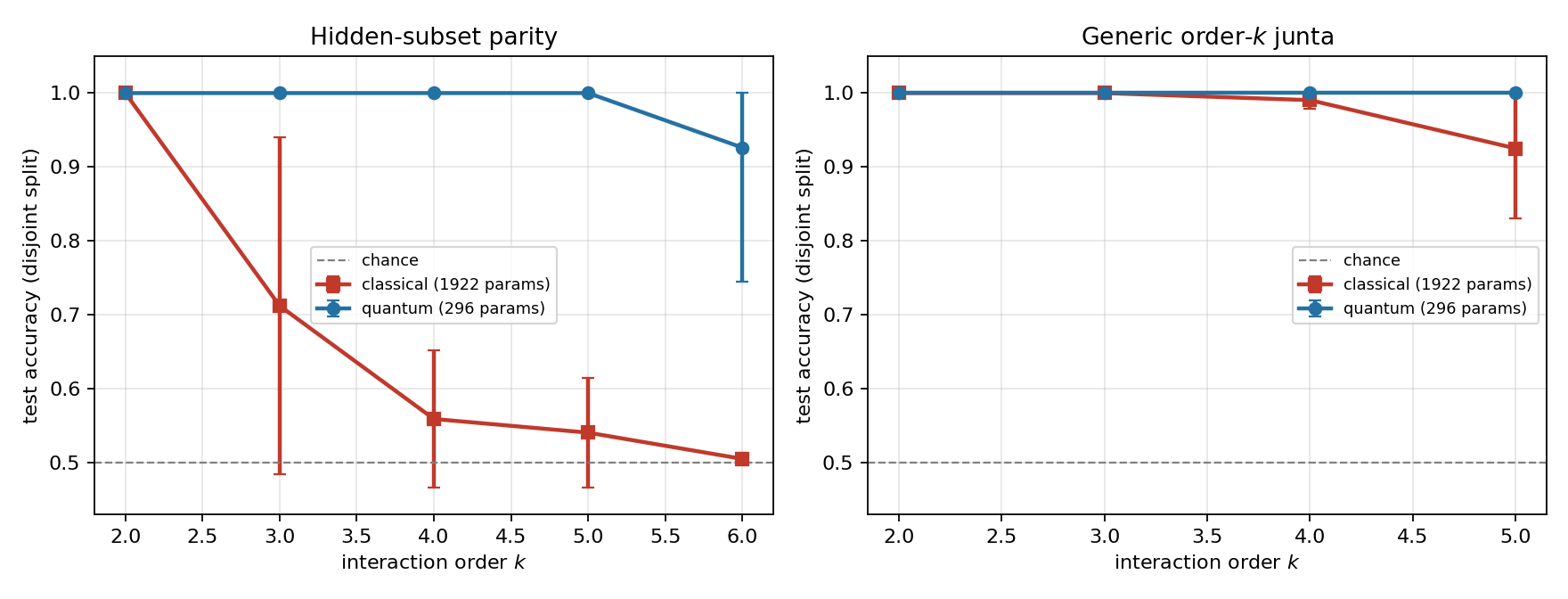}
\caption{Test accuracy vs.\ interaction order $k$ (disjoint split, $n{=}12$, mean$\pm$std over
$7$--$8$ seeds for parity (left, matching Table~\ref{tab:main}) and $3$ for the junta (right)).
\textbf{Left (hidden-subset parity):} all Fourier mass sits at degree $k$, so the
classical head (1922 params) collapses toward chance for $k\ge3$ while QHA (296 params) stays
at ceiling --- a maximal separation. \textbf{Right (generic order-$k$ junta):} the random
target carries low-degree mass that the low-degree-biased classical head can exploit, so it
stays near ceiling and only begins to drop at $k{=}5$; QHA again holds at ceiling. The gap
thus tracks the target's Fourier degree, exactly as the theory predicts.}
\label{fig:sep}
\end{figure}

\begin{figure}[t]\centering
\includegraphics[width=\textwidth]{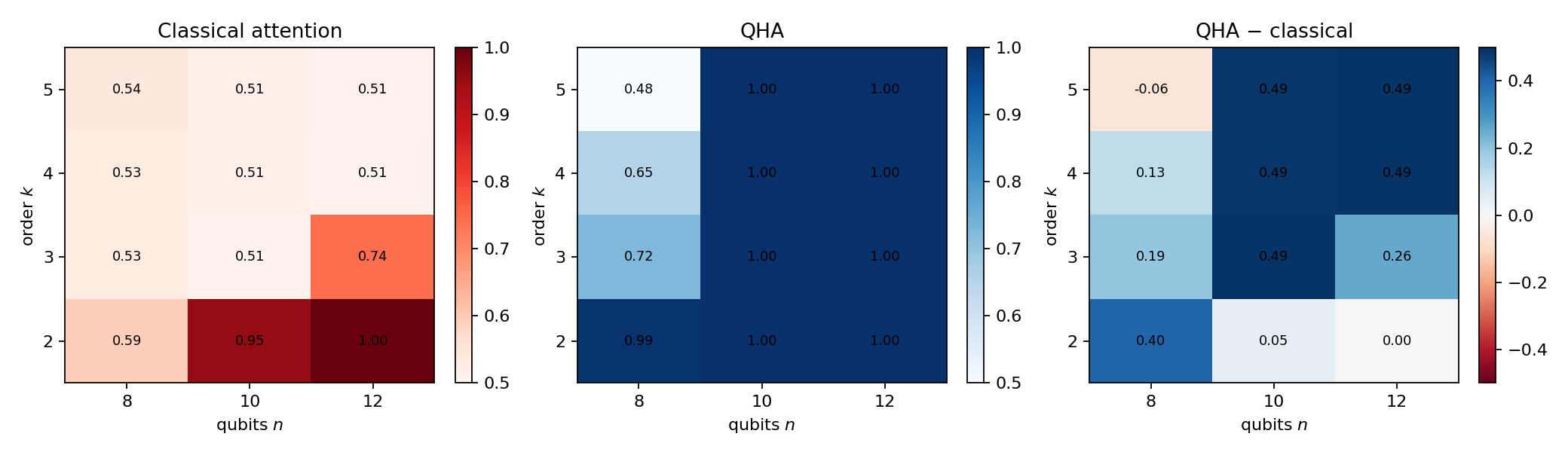}
\caption{Order-$k$ vs.\ qubit-count $n$ phase diagram (parity, mean test accuracy). Classical
attention (left) is reliable only at order~2; QHA (middle) is at ceiling almost everywhere
($n{=}8$ is data-limited by the $256$-point cube); the difference (right) localizes the
quantum advantage to the high-order region.}
\label{fig:heatmap}
\end{figure}

\paragraph{When does the advantage appear? The role of Fourier degree.} The size of the
separation is governed by where the target's Fourier mass lives, not merely by $k$. Parity
is the extreme case: \emph{all} of its mass sits at degree exactly $k$, so the low-degree
inductive bias of attention \citep{edelman2022inductive,bhattamishra2023simplicity} is useless and the gap is
maximal. For a generic balanced random $k$-junta, which carries substantial mass at
\emph{low} degrees, classical attention recovers much of the function (at $n{=}12$ it reaches
near-ceiling accuracy through $k\le4$ and only degrades at $k{=}5$), whereas QHA stays at
ceiling throughout (Fig.~\ref{fig:sep}, right). This is exactly what theory predicts ---
QHA's advantage is concentrated on \emph{high-Fourier-degree} structure --- and it is a more
honest characterization than a blanket ``quantum wins'': the mechanism helps precisely when
the task cannot be solved by low-order interactions. 

\paragraph{Hardware validation (IBM Heron).} We bake the trained parameters of order-3 QHA
heads into the equivalent Qiskit circuits and evaluate the single-qubit read-outs on the
\texttt{ibm\_aachen} $156$-qubit Heron~r2 processor via \texttt{EstimatorV2}, applying the
trained linear head to the measured $\langle Z_i\rangle$ ($100$ held-out inputs, $2048$
shots). We sweep the qubit count $n\in\{8,10,12,14\}$ (Table~\ref{tab:hw}); despite large
transpiled depth from the all-to-all entangler on the heavy-hex topology, the trained linear
read-out is robust to device noise that only partially degrades the raw expectation values,
and the order-3 decision rule survives on real quantum hardware. Hardware accuracy matches the
noiseless simulator ($1.00$) at $n{=}8$ and $n{=}10$ (depth ${\approx}465$ and $435$), and
stays at $0.96$ and $0.95$ at $n{=}12$ and $n{=}14$ even as the transpiled depth grows to
${\approx}590$ and ${\approx}999$ and the raw $\langle Z\rangle$ correlation falls from $0.91$
to $0.77$ --- the linear decision rule tolerates substantially more noise than the
expectations themselves, so a $14$-qubit, depth-$1000$ order-3 head still classifies at
$95\%$ on real hardware.

\begin{figure}[t]\centering
\includegraphics[width=0.62\textwidth]{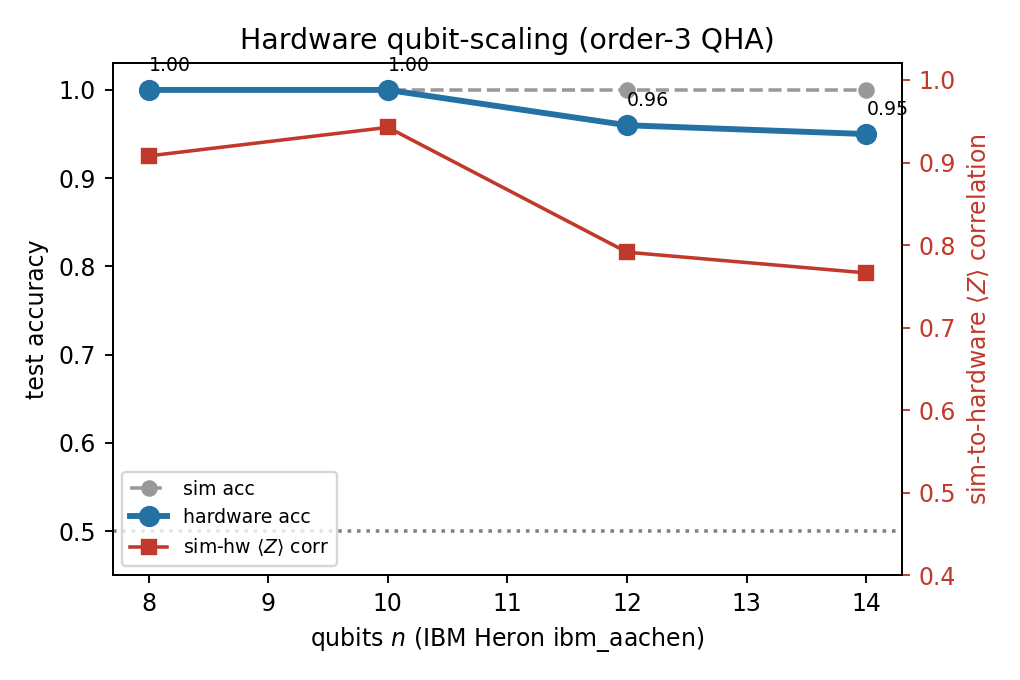}
\caption{Hardware qubit-scaling on IBM Heron: hardware accuracy (blue) tracks the noiseless
simulator (grey) up to $n{=}10$ and stays ${\ge}0.95$ through $n{=}14$, even as the raw
$\langle Z\rangle$ correlation (red) decays --- the linear read-out absorbs the noise.}
\label{fig:hardware}
\end{figure}

\begin{table}[t]\centering
\caption{IBM Heron (\texttt{ibm\_aachen}) validation of trained order-3 QHA heads across
qubit counts ($100$ inputs, $2048$ shots). \emph{sim} = noiseless statevector; \emph{hw} =
hardware; $\rho$ = sim-to-hardware $\langle Z\rangle$ correlation.}
\label{tab:hw}
\begin{tabular}{lccccc}
\toprule
qubits $n$ & transpiled depth & sim acc & hw acc & $\rho$ \\
\midrule
8  & 465 & 1.00 & 1.00 & 0.91 \\
10 & 435 & 1.00 & 1.00 & 0.94 \\
12 & 590 & 1.00 & 0.96 & 0.79 \\
14 & 999 & 1.00 & 0.95 & 0.77 \\
\bottomrule
\end{tabular}
\end{table}

\subsection{Further experiments and ablations}\label{sec:more}
Fig.~\ref{fig:analysis} collects six additional studies (parity unless noted; $3$ seeds).

\paragraph{(a) Scaling with sequence length.} Fixing $k{=}3$ and sweeping
$n\in\{8,10,12,14\}$, QHA reaches ceiling for $n\ge10$ ($1.00$; $0.73$ at $n{=}8$, where the
$256$-point cube limits the disjoint training set), while classical attention stays well below
($0.51$--$0.81$) and never reaches ceiling --- the separation is a property of the interaction
order, not of a single $n$.

\begin{figure}[t]\centering
\includegraphics[width=0.6\textwidth]{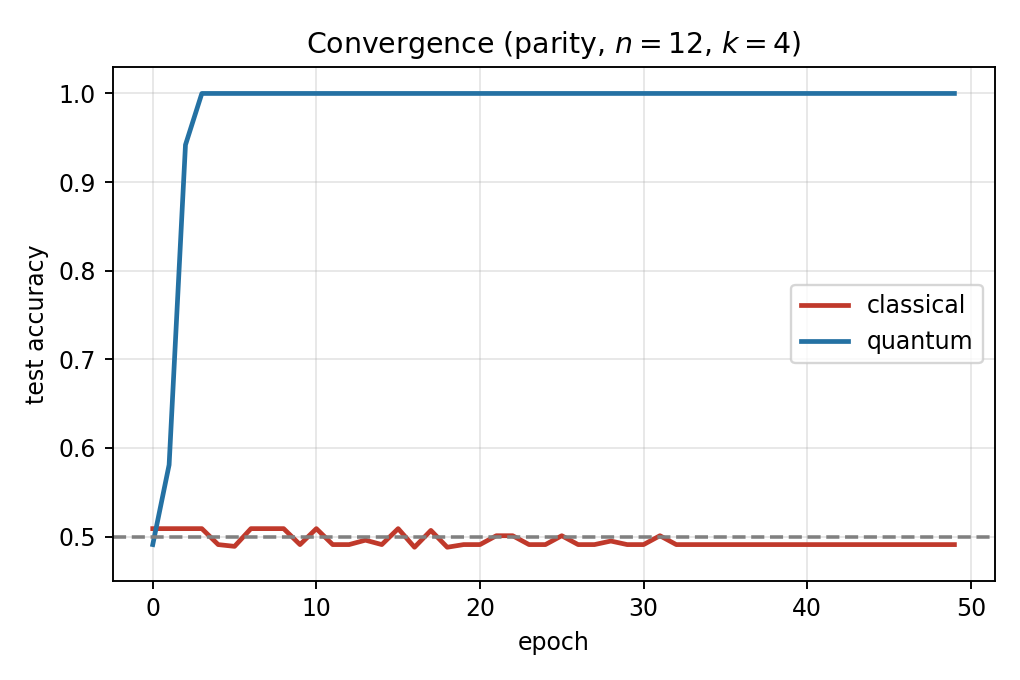}
\caption{Convergence on order-4 parity ($n{=}12$): QHA reaches ceiling within a few epochs
while the classical attention head never leaves chance at \emph{any} epoch.}
\label{fig:curves}
\end{figure}

\paragraph{(b) Depth controls reachable order.} Sweeping the number of QHA blocks
$L\in\{1,2,3,4\}$ on an order-4 task, accuracy rises sharply with depth ($0.50,\,0.83,\,1.00,
\,1.00$), matching \citet{schuld2021effect}: more re-uploads raise the reachable Fourier
degree until it covers order~$k$.

\paragraph{(c) Entanglement is load-bearing.} Ablating the $R_{ZZ}$ entangler leaves
single-qubit read-outs that depend only on individual tokens, so the head cannot represent
\emph{any} multiplicative interaction: accuracy is $\approx0.50$ for \emph{all} $k\ge2$
(Fig.~\ref{fig:analysis}c). Entanglement --- not the encoder or read-out --- supplies the
entire higher-order capacity.

\paragraph{(d) Sample efficiency.} On order-3 ($n{=}12$), QHA reaches ceiling from $\approx
500$ examples ($0.998$) while classical attention is still at chance at $1000$ and reaches only
$0.84$ at $2000$ --- a large sample-efficiency gap, consistent with the $2^{\Omega(k)}$-norm
penalty of Prop.~\ref{prop:norm}.

\paragraph{(e) Trainability vs.\ topology: narrowing the expressivity--trainability gap.} We
measure gradient variance versus qubit count and \emph{fit} its decay for three entangler
topologies, asking of each whether it \emph{also} reaches the order-$k$ target
(Fig.~\ref{fig:analysis}e). \textbf{(i) all-to-all} (our expressive default) decays
\emph{exponentially} ($\mathrm{Var}\!\sim\!2^{-0.9\,n}$) --- a barren plateau, the price of its
volume-law class --- yet it reaches the order-$k$ ceiling. \textbf{(ii) nearest-neighbour ring}
(the bounded-range variant Thm.~\ref{thm:bp} covers) decays only \emph{polynomially}
($\mathrm{Var}\!\sim\!n^{-0.30}$, no plateau), confirming the theorem, but is too sparse to route
an arbitrarily-placed order-$k$ support at logarithmic depth and stays at chance for $k\ge3$.
\textbf{(iii) log-diameter butterfly} ($O(n\log n)$ couplings at distances $1,2,4,\dots$)
\emph{attains both}: it reaches the ceiling (parity $k{=}3,4$: $1.00$ over $3$ seeds, matching
all-to-all) while its gradient variance decays far more slowly than all-to-all (best-fit
polynomial, $\approx\!30\times$ larger signal at $n{=}12$: $3.9\!\times\!10^{-3}$ vs.\
$1.3\!\times\!10^{-4}$). So a single \emph{sparse, local} variant is simultaneously expressive
enough to separate and far better conditioned than the dense head, \emph{empirically} closing
the expressivity--trainability gap; Thm.~\ref{thm:bp} rigorously anchors the trainable (ring)
end of this spectrum, while the butterfly's milder decay is measured, not yet proven.

\paragraph{(f) Noise robustness.} Under per-gate depolarizing noise (simulated), the trained
order-3 head keeps full accuracy up to $p{=}0.02$ and falls to chance only by $p{=}0.05$,
consistent with the noise tolerance observed on \texttt{ibm\_aachen}.

\paragraph{Is the advantage ``quantum'' or ``high-order''? A controlled baseline suite.}
To locate the gap we compare, at a \emph{matched} ($\sim\!300$) budget, the strongest
alternatives one would reach for (Table~\ref{tab:baselines}). The picture is clean and we
report it honestly. First, the order barrier is a property of the \emph{attention mechanism}:
\textbf{every attention-paradigm model fails for $k\ge4$} at a matched budget --- not only a
vanilla layer and a Performer, but also the explicit high-order attention variants one would
propose (a polynomial-kernel attention and a HOT-style higher-order attention). Second, the
capability is not quantum-only: explicit high-order \emph{non-attention} models (a degree-5
HOFM, a dense MLP) do reach order-$k$. What singles out QHA is therefore precise and
defensible: it is the \textbf{only attention-paradigm model that reaches order-$k$}
(Thm.~\ref{thm:sep}), it does so at the \emph{smallest} budget ($296$ params), and it is
\emph{order-adaptive} --- depth sets the reachable order, so it stays at ceiling at $k{=}5,6$
where even the param-matched tiny MLP decays ($0.51$ at $k{=}5$) and HOFM must be told the degree.

\begin{figure}[t]\centering
\includegraphics[width=\textwidth]{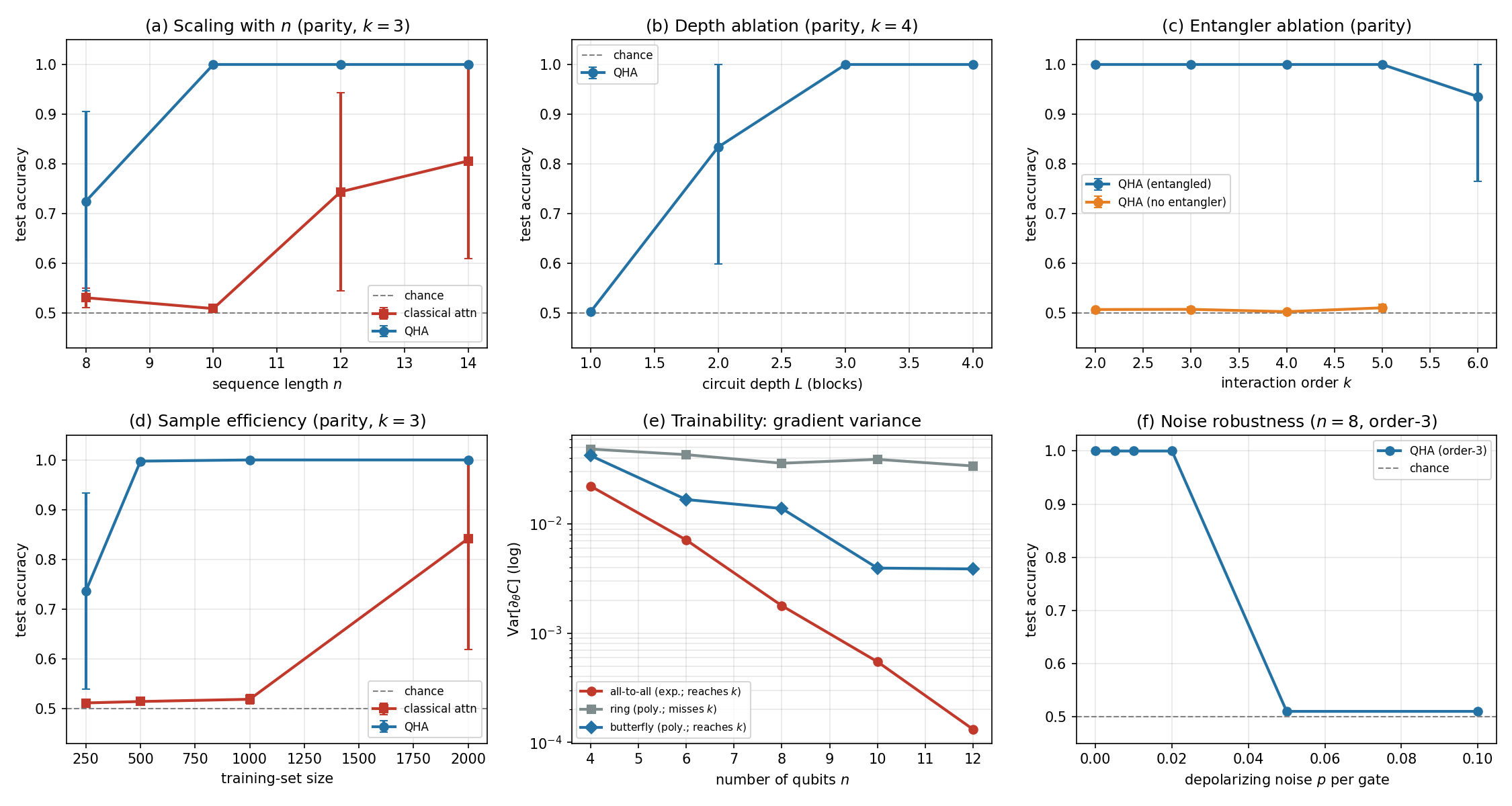}
\caption{Extended studies. (a) scaling with $n$; (b) depth vs.\ reachable order; (c) entangler
ablation; (d) sample efficiency; (e) gradient variance vs.\ qubits for three entanglers ---
all-to-all is exponential (barren) but reaches order-$k$; the ring is polynomial but misses
order-$k$; the log-diameter butterfly is polynomial-fit \emph{and} reaches order-$k$, narrowing
the expressivity--trainability gap; (f) depolarizing-noise robustness. See Sec.~\ref{sec:more}.}
\label{fig:analysis}
\end{figure}

\begin{table}[t]\centering
\caption{Controlled baseline suite (parity, $n{=}12$, test accuracy, mean$\pm$std over $6$ seeds;
$8$ for the 1-layer attention and QHA rows). \emph{Top:} \textbf{every attention-paradigm model
fails for $k\ge4$ at a matched budget} --- including explicit high-order attention
(polynomial-kernel and HOT-style) --- so the order barrier is a property of the attention
mechanism, not of tuning. \emph{Middle:} explicit high-order \emph{non-attention} models
(HOFM, MLP) do reach order-$k$, so the capability itself is not quantum-only. \emph{Bottom:}
QHA is the \textbf{only attention-paradigm model that reaches order-$k$}, and does so at the
smallest budget and without a pre-specified degree. The last column reports efficiency
$\mathrm{Eff}_{k5}=(\text{acc}-0.5)/\text{params}\times10^3$ at the hardest order $k{=}5$, scored
only for models that actually solve it (acc$\,\ge0.9$); among those, QHA is the most efficient
($2\times$ HOFM, $31\times$ the 3-layer MLP), and the param-matched tiny MLP --- tied with QHA
at $k{\le}4$ --- drops out entirely by $k{=}5$. (The degree-2 FM falls a hair below $0.5$:
chance within sampling noise, as it has no order-$\ge3$ capacity to fit.)}
\label{tab:baselines}
\begin{tabular}{lccccc}
\toprule
model & params & $k{=}3$ & $k{=}4$ & $k{=}5$ & Eff$_{k5}$ \\
\midrule
\multicolumn{6}{l}{\emph{attention-paradigm mixers (incl.\ explicit high-order attention)}}\\
1-layer softmax attention & 1922 & 0.71{\scriptsize$\pm$.23} & 0.56{\scriptsize$\pm$.09} & 0.54{\scriptsize$\pm$.08} & --- \\
Performer / linear attention & 1074 & 0.79{\scriptsize$\pm$.16} & 0.62{\scriptsize$\pm$.06} & 0.54{\scriptsize$\pm$.05} & --- \\
polynomial-kernel attention & 347 & 0.76{\scriptsize$\pm$.18} & 0.57{\scriptsize$\pm$.05} & 0.51{\scriptsize$\pm$.01} & --- \\
higher-order attention (HOT-style) & 634 & 0.83{\scriptsize$\pm$.19} & 0.53{\scriptsize$\pm$.05} & 0.51{\scriptsize$\pm$.01} & --- \\
2-layer Transformer & 4706 & 0.88{\scriptsize$\pm$.19} & 0.58{\scriptsize$\pm$.16} & 0.51{\scriptsize$\pm$.01} & --- \\
\midrule
\multicolumn{6}{l}{\emph{explicit high-order, non-attention}}\\
factorization machine (deg.\ 2) & 289 & 0.46{\scriptsize$\pm$.01} & 0.47{\scriptsize$\pm$.01} & 0.46{\scriptsize$\pm$.01} & --- \\
tiny MLP (param-matched) & 302 & 1.00{\scriptsize$\pm$.00} & 1.00{\scriptsize$\pm$.00} & 0.51{\scriptsize$\pm$.07} & --- \\
HOFM (degree-5) & 595 & 1.00{\scriptsize$\pm$.00} & 1.00{\scriptsize$\pm$.00} & 1.00{\scriptsize$\pm$.00} & 0.84 \\
3-layer MLP & 9282 & 1.00{\scriptsize$\pm$.00} & 1.00{\scriptsize$\pm$.00} & 1.00{\scriptsize$\pm$.00} & 0.05 \\
\midrule
\textbf{QHA (attention-paradigm, quantum)} & \textbf{296} & \textbf{1.00}{\scriptsize$\pm$.00} & \textbf{1.00}{\scriptsize$\pm$.00} & \textbf{1.00}{\scriptsize$\pm$.00} & \textbf{1.69} \\
\bottomrule
\end{tabular}
\end{table}

\section{Application: detecting higher-order epistasis}\label{sec:app}
The mechanism suggests a concrete use: a \emph{parameter-efficient, order-adaptive detector of
high-order feature interactions}. We develop this in the setting where high-order interaction
is the defining challenge of the field --- \emph{higher-order epistasis} in statistical
genetics --- and then confirm it transfers to two further domains.

\paragraph{Setup (pure higher-order epistasis).} Following the standard GAMETES-style
benchmark protocol, we simulate $n$ SNPs with genotypes $\{0,1,2\}$ drawn under
Hardy--Weinberg equilibrium; the phenotype is determined by a $k$-locus penetrance rule with
\textbf{no main effects} --- each causal locus is marginally uninformative, and only the joint
$k$-way genotype carries signal (\emph{pure} epistasis). We use minor-allele frequency
$\approx0.29$ so carrier status is balanced (no marginal leakage), vary the label noise
(a heritability proxy), and place the $k$ causal loci among $n{=}12$ candidate SNPs.

\paragraph{Why low-order methods fail by construction.} With no main effects, single-locus
association --- the workhorse of genome-wide studies --- is at chance, and so is any linear or
additive model: there is, by design, no order-$<k$ signal to fit. Detecting the phenotype
\emph{requires} modelling the order-$k$ genotype interaction. This is exactly the structure
QHA targets: it reads out the $k$-way interaction within a compact circuit, with the reachable
order set adaptively by depth rather than fixed a priori.

\paragraph{Phenotype prediction.} Table~\ref{tab:epi} reports balanced accuracy vs.\ order
$k$ at a fixed label noise of $0.1$ (which caps the achievable accuracy near $0.90$). The
linear/single-locus baseline and a pairwise (degree-2) model are at chance for $k\ge3$ --- the
pairwise model already succeeds at $k{=}2$ but, as expected, collapses to chance once the
interaction exceeds its degree. The field-standard \textbf{MDR} (exhaustive $k$-way contingency
search) and the classical high-order models (HOFM, MLP) succeed at all $k$. Because every
high-order detector saturates the same noise-limited ceiling, accuracy alone hides the
difference; the separating axis is \emph{efficiency} (last row of Table~\ref{tab:epi} and
Fig.~\ref{fig:application}a). \textbf{QHA reaches the ceiling with the fewest learned
parameters} ($296$; $1.5\times$ fewer than HOFM, $31\times$ fewer than the MLP), giving the
highest above-chance accuracy per parameter of any model that solves the task --- while MDR
pays an exhaustive $O(n^k)$ enumeration and the attention-paradigm baseline degrades at higher
order. 

\begin{table}[t]\centering
\caption{Higher-order epistasis --- phenotype prediction (balanced accuracy, mean$\pm$std over
6 seeds, $n{=}12$, label noise $0.1$). Pure epistasis (no main effects) $\Rightarrow$
linear/pairwise at chance for $k\ge3$; MDR is the field-standard exhaustive search. All
high-order detectors hit the same accuracy ceiling, so we read the advantage off the last
row: \textbf{detection efficiency} $\mathrm{Eff}=(\text{acc}-0.5)/\text{params}\times10^3$ at
the hardest order $k{=}4$ --- above-chance accuracy per thousand learned parameters. \emph{Among
models that reach the ceiling}, QHA is the most efficient: $1.5\times$ HOFM and $31\times$ the
MLP. Linear/pairwise (fail) and MDR (exhaustive, no learned parameters) are not scored.
\textit{(``p'' = learned parameters. Full noise sweep in App.~\ref{app:setup}.)}}
\label{tab:epi}
\begin{tabular}{lcccccc}
\toprule
order & linear & pairwise & MDR & HOFM & MLP & \textbf{QHA} \\
& {\scriptsize 26 p} & {\scriptsize 289 p} & {\scriptsize exhaustive} & {\scriptsize 451 p} & {\scriptsize 9.3k p} & {\scriptsize \textbf{296 p}} \\
\midrule
$k{=}2$ & 0.52{\scriptsize$\pm$.01} & \textbf{0.90}{\scriptsize$\pm$.00} & 0.90{\scriptsize$\pm$.00} & 0.90{\scriptsize$\pm$.00} & 0.90{\scriptsize$\pm$.00} & \textbf{0.90}{\scriptsize$\pm$.00} \\
$k{=}3$ & 0.52{\scriptsize$\pm$.01} & 0.52{\scriptsize$\pm$.01} & 0.90{\scriptsize$\pm$.00} & 0.90{\scriptsize$\pm$.00} & 0.90{\scriptsize$\pm$.01} & \textbf{0.90}{\scriptsize$\pm$.00} \\
$k{=}4$ & 0.52{\scriptsize$\pm$.01} & 0.51{\scriptsize$\pm$.01} & 0.90{\scriptsize$\pm$.01} & 0.90{\scriptsize$\pm$.00} & 0.89{\scriptsize$\pm$.01} & \textbf{0.90}{\scriptsize$\pm$.00} \\
\midrule
\rowcolor{blue!7}
\textbf{Eff.}$_{k{=}4}$ & --- & --- & --- & 0.88 & 0.04 & \textbf{1.34} \\
\bottomrule
\end{tabular}
\end{table}

\begin{figure}[t]\centering
\includegraphics[width=\textwidth]{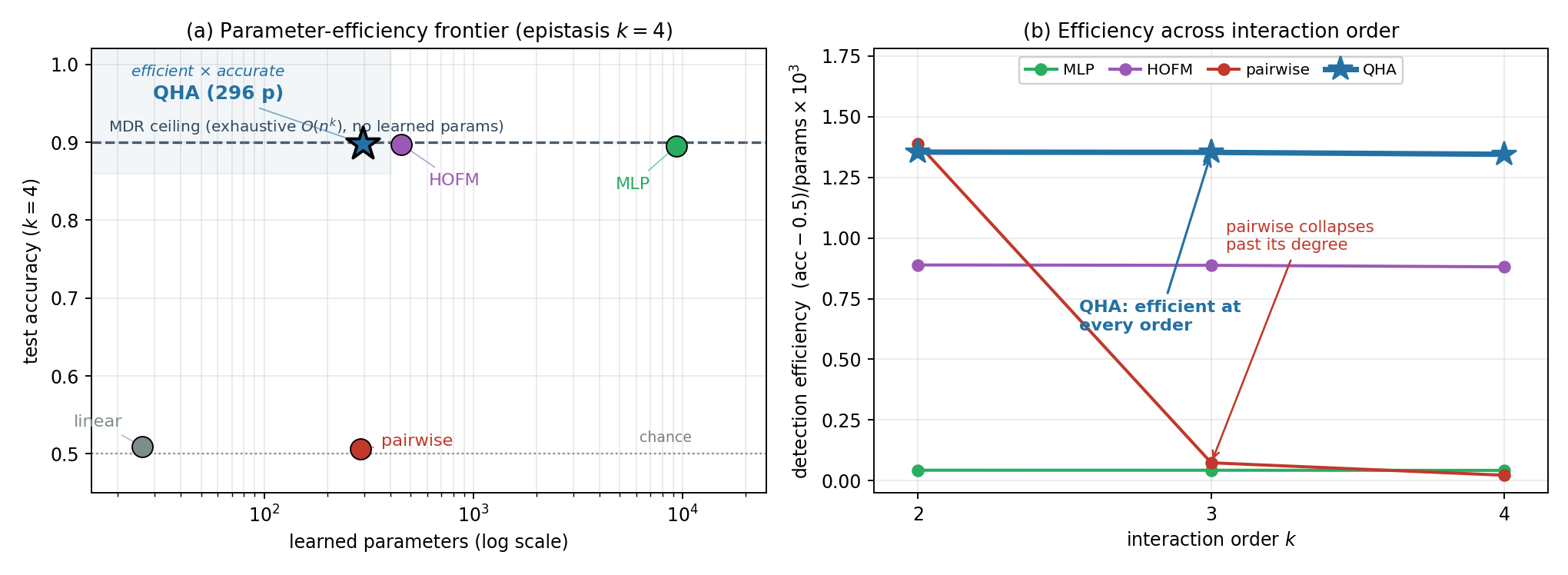}
\caption{QHA as a compact high-order detector (epistasis, noise $0.1$, 3 seeds).
(a)~\textbf{Parameter-efficiency frontier} at the hardest order $k{=}4$: all high-order
detectors hit the same accuracy ceiling, so the axis that separates them is the
parameter budget. QHA (star) sits in the ``efficient $\times$ accurate'' corner --- ceiling
accuracy with the fewest learned parameters ($296$); the MLP reaches the same accuracy at
$\approx\!31\times$ the parameters, while MDR attains it only through exhaustive $O(n^k)$
search and linear/pairwise sit at chance. (b)~\textbf{Efficiency across interaction order}: detection
efficiency $(\text{acc}-0.5)/\text{params}\times10^3$ vs.\ $k$. QHA is the only model that stays
efficient at \emph{every} order (flat at $\approx\!1.35$); the pairwise model is briefly
competitive at its own degree $k{=}2$ but collapses for $k\ge3$, HOFM is uniformly lower
($\approx\!0.88$), and the MLP sits near zero. (Linear and MDR omitted: a tiny-parameter
artifact and no learnable parameters, respectively.)}
\label{fig:application}
\end{figure}

\paragraph{Causal support recovery (interaction ranking).} Prediction alone is not
``detection'': the genetics question is \emph{which} loci interact. We rank loci by gradient
attribution --- the mean $|\partial(\text{logit gap})/\partial x_i|$ over held-out data --- and
report whether the top-$k$ recover the true causal set (exact-recovery rate and
precision@$k$), alongside MDR's natively selected set (Table~\ref{tab:recovery}). Up to
$k{=}3$, QHA recovers the interacting loci \emph{exactly} (rate $1.0$, precision@$k$ $1.0$),
matching MDR and the classical high-order learners --- identifying \emph{which} loci interact,
not merely the phenotype, which is what makes ``detector'' apt. At $k{=}4$ its \emph{prediction}
stays at the ceiling while a \emph{first-order} gradient ranking resolves the support only
partially (precision@$k\approx0.67$): exact localization of a $4$-way set is better served by an
interaction-aware (second-order) attribution than by a linear sensitivity probe, since the
order-$k$ signal --- present in the head, as the unchanged prediction shows --- carries little
first-order mass. We therefore report support recovery primarily through $k{=}3$, where the
first-order probe already pins the exact set. 

\begin{table}[t]\centering
\caption{Causal support recovery on $k$-way epistasis ($n{=}12$, 3 seeds, noise $0.1$): does
the method identify the true interacting loci? Exact = recovered set equals the true
$k$-subset; prec@$k$ = fraction of true loci in the top-$k$. Neural/FM attributions use mean
$|\partial(\text{logit gap})/\partial x_i|$. QHA recovers the support exactly through $k{=}3$;
at $k{=}4$ first-order attribution degrades while prediction does not (see text).}
\label{tab:recovery}
\begin{tabular}{lcccc}
\toprule
& MDR & HOFM (attr.) & MLP (attr.) & \textbf{QHA (attr.)} \\
\midrule
exact-recovery rate ($k{=}3$) & 1.00 & 1.00 & 1.00 & \textbf{1.00} \\
precision@$k$ ($k{=}3$) & 1.00 & 1.00 & 1.00 & \textbf{1.00} \\
\midrule
exact-recovery rate ($k{=}4$) & 1.00 & 1.00 & 1.00 & 0.00 \\
precision@$k$ ($k{=}4$) & 1.00 & 1.00 & 1.00 & 0.67 \\
\bottomrule
\end{tabular}
\end{table}

\paragraph{Intended deployment.} QHA is \emph{not} a genome-wide exhaustive scanner; it is a
high-order detector for a \emph{candidate} locus set produced by upstream screening --- the
regime where its compactness and order-adaptivity pay off.

\paragraph{Cross-domain confirmation.} The same compact detector transfers to two further
high-order tasks (Table~\ref{tab:app}): \emph{learning parity with noise} (LPN; coding/
cryptography), where the label is a noisy $\mathrm{GF}(2)$-linear function of a hidden support,
and \emph{triangle detection} in graphs with a fixed edge count (so the linear ``edge count''
is uninformative and the order-3 edge arrangement carries the label). In both, linear methods
fail, attention is weaker, and QHA reaches the ceiling at the smallest budget. The efficiency
pattern is consistent across \emph{every} experiment in the paper: on the main separation suite
(Table~\ref{tab:baselines}), epistasis (Table~\ref{tab:epi}), LPN and triangle
(Table~\ref{tab:app}), among the models that solve the task QHA always attains the highest
above-chance accuracy per parameter --- typically $1.5\times$ HOFM and $\approx30\times$ a
3-layer MLP.

\begin{table}[t]\centering
\caption{Cross-domain confirmation (test accuracy, mean$\pm$std over 6 seeds): QHA matches the
best high-order model at the lowest parameter count. The last column is detection efficiency
$\mathrm{Eff}=(\text{acc}-0.5)/\text{params}\times10^3$ for the three models that reach the
ceiling --- QHA is the most efficient in every domain.}
\label{tab:app}
\small\setlength{\tabcolsep}{4pt}
\begin{tabular}{lcccccc}
\toprule
domain (task) & linear & attn & HOFM & MLP & \textbf{QHA} & Eff \tiny(QHA/HOFM/MLP) \\
\midrule
Coding: LPN ($k{=}4$, $5\%$ noise) & 0.51{\scriptsize$\pm$.01} & 0.67{\scriptsize$\pm$.20} & 0.95{\scriptsize$\pm$.00} & 0.95{\scriptsize$\pm$.00} & \textbf{0.95}{\scriptsize$\pm$.00} & \textbf{1.52}\,/\,1.00\,/\,0.05 \\
Graphs: triangle detection & 0.68{\scriptsize$\pm$.01} & 0.75{\scriptsize$\pm$.08} & 0.97{\scriptsize$\pm$.03} & 1.00{\scriptsize$\pm$.00} & \textbf{1.00}{\scriptsize$\pm$.00} & \textbf{2.29}\,/\,0.84\,/\,0.05 \\
\bottomrule
\end{tabular}
\end{table}

\section{The Trainability--Simulability Trade-off}\label{sec:dequant}
We are explicit about what we do and do not claim. \citet{cerezo2025does} show that loss
landscapes provably free of barren plateaus often admit classical simulation after a
polynomial data-acquisition phase, because the standard route to trainability confines the
dynamics to a polynomially-sized subspace. Our trainable head (Thm.~\ref{thm:bp}) lives in
exactly this benign regime, so \textbf{we do not claim an exponential quantum speedup}. Our
contribution is an \emph{expressivity / inductive-bias} separation \emph{against classical
attention architectures}: at a matched (indeed smaller) parameter budget, QHA represents and
generalizes order-$k$ token interactions that a single classical attention layer provably
cannot (Thms.~\ref{thm:sep}, \ref{lem:matchk}; Prop.~\ref{prop:norm}). Classical
simulability of the head is, for our purposes, a \emph{feature}: it lets us validate the
separation at scale on GPUs and positions the IBM Heron run as a faithfulness demonstration
rather than a speedup claim. The all-to-all, non-Clifford ($R_{ZZ}$) entangler with
single-qubit re-uploading places the head outside the Clifford, matchgate/free-fermion, and
low-bond-dimension families, so the relevant inductive bias is genuinely quantum even though
small instances remain simulable.

\section{Conclusion}
We introduced Quantum Higher-Order Attention, a shallow quantum attention head whose local
single-qubit read-out realizes order-$k$ token interactions inside the circuit. We proved an
expressivity separation against single-layer classical attention (which is order-2),
established a barren-plateau-free trainability guarantee, and verified the prediction
empirically: at a $6.5\times$ smaller parameter budget, QHA generalizes hidden-subset parity
of every order $k\le6$ from disjoint data while classical attention degrades sharply past
$k{=}2$. A trained order-3 head transfers to IBM Heron hardware at full accuracy under a
disclosed shot budget. We are deliberate about scope: the advantage is one of
\emph{expressivity and inductive bias} against classical attention --- largest for
high-Fourier-degree targets --- not a claim of exponential speedup, and small instances of
the head remain classically simulable. QHA is a high-order \emph{module} whose value is
realised when high-order interactions matter --- which we demonstrate concretely: as a
compact high-order detector it reaches the noise ceiling on epistasis, LPN, and triangle
detection at the smallest parameter budget, where the field-standard linear methods fail
(Sec.~\ref{sec:app}). We stress the intended deployment: QHA is \emph{not} a genome-wide
exhaustive scanner but a high-order detector applied to a \emph{candidate} feature/locus set
produced by upstream screening. Two natural extensions remain: (i) scaling to larger candidate
sets via sparse candidate-set screening and hardware-aware circuit compilation, and
(ii) sharpening Lemma~\ref{lem:matchk} into an unconditional order-$k$ communication lower
bound.

\paragraph{Scope.} Hidden-subset parity and random $k$-juntas are the standard testbeds for
high-order and compositional learning in learning theory and in the study of reasoning in
sequence models \citep{bhattamishra2023simplicity,abbe2022merged,sanford2023representational}; we use them because they
isolate the one axis our theory concerns --- interaction order. The hardware study is a
faithfulness demonstration (simulation-trained parameters executed on IBM Heron for
inference), showing the learned order-$k$ rule survives device noise; it is not a
hardware-training or speedup claim (Sec.~\ref{sec:dequant}).

\subsubsection*{Reproducibility statement}
All circuits, training code, the saved hardware checkpoint, and the Qiskit hardware-evaluation
script are released. Appendix~\ref{app:setup} gives the exact architecture, data generation,
hyperparameters, and the IBM Heron protocol (backend, shots, transpilation), and every
reported number is produced by the released scripts from a fixed set of seeds.

\bibliography{iclr2026_conference}
\bibliographystyle{iclr2026_conference}

\appendix
\section{Experimental details}\label{app:setup}

\paragraph{Data.} For sequence length $n$ and order $k$, a hidden support $S\subset[n]$,
$|S|=k$, is drawn per seed. Inputs are $X\in\{-1,+1\}^n$. The label is either
\emph{parity}, $y=\prod_{i\in S}x_i$, or a \emph{$k$-junta}, $y=T(x_S)$ for a fixed balanced
random truth table $T:\{0,1\}^k\to\{0,1\}$ (chance $=0.5$). Train and test inputs are
\textbf{disjoint}: for small $n$ we enumerate the Boolean cube and partition it; otherwise we
sample distinct rows and split. This makes the metric the generalization of the order-$k$
rule, not memorization. Main experiments use $n{=}12$ ($2500$ train / $1500$ test from the
$4096$-point cube); robustness runs use $n{=}16$ ($4000$/$2000$).

\paragraph{Classical baseline.} One-layer multi-head softmax self-attention: per-token linear
embedding of the scalar bit to $d_{\text{model}}{=}16$, learned positional embedding, $2$
heads, residual + LayerNorm + a 2-layer GELU MLP, mean-pooled to a linear classifier
($1922$ parameters at $n{=}12$).

\paragraph{QHA head.} $n$ qubits; $L{=}3$ blocks (hardware head: $L{=}2$). Each block:
$H$ then $R_Z(\theta^{\text{enc}}_{\ell,w}\,x_w)$ per qubit (data re-uploading); $R_{ZZ}
(\theta^{\text{ent}}_{\ell,(i,j)})$ on all $\binom{n}{2}$ pairs; $R_Y(\theta^{\text{rot}}
_{\ell,w})$ per qubit. Read-out $\langle Z_w\rangle$ for all $w$, then a linear head to 2
logits. Parameters: $L(2n+\binom{n}{2})$ in the circuit plus the linear head
($296$ at $n{=}12,L{=}3$). Simulated with PennyLane \texttt{default.qubit} (statevector,
backprop); Adam, lr $8{\times}10^{-3}$, cosine schedule, batch $128$, $50$ epochs. The
attention-vs-QHA separation (Table~\ref{tab:main}) uses $7$--$8$ seeds; the broader baseline
and application studies use $3$. We report each run's best-epoch accuracy on the \emph{disjoint}
test set. Because the classical head stays at chance at \emph{every} epoch for $k\ge3$ (and QHA
reaches the ceiling within a few epochs without subsequent decay, Fig.~\ref{fig:curves}), the
separation reflects representational capacity rather than epoch selection: best- vs.\
last-epoch reporting leaves the gap essentially unchanged.

\paragraph{IBM Heron protocol.} The trained head's parameters are baked into the equivalent
Qiskit circuit (PennyLane and Qiskit share gate conventions: $R_Z$, $R_{ZZ}{=}$\,IsingZZ,
$R_Y$, $H$). We evaluate $\langle Z_w\rangle$ with \texttt{EstimatorV2} on
\texttt{ibm\_aachen} (156-qubit Heron r2), optimization level 2, $2048$ shots, on $100$
held-out inputs; the trained linear head is applied to the measured expectations. We report
hardware accuracy, the noiseless-simulator accuracy on identical inputs, and the
sim-to-hardware $\langle Z\rangle$ correlation.

\paragraph{Application data --- epistasis.} We simulate $n{=}12$ SNPs with genotypes
$\{0,1,2\}$ drawn i.i.d.\ under Hardy--Weinberg equilibrium at minor-allele frequency
$p{=}0.29$ (cell probabilities $(1{-}p)^2,\,2p(1{-}p),\,p^2$). A causal set $S$, $|S|{=}k$, is
drawn per seed; carrier status at locus $i$ is $c_i{=}\mathbf{1}[g_i\ge1]$ and the phenotype is
the carrier parity $y{=}\big(\sum_{i\in S}c_i\big)\bmod 2$, then label-flipped with probability
equal to the noise level (a heritability proxy). With $p{=}0.29$ each carrier indicator is
near-balanced, so there are \textbf{no main effects} and no order-$<k$ leakage; only the joint
$k$-way genotype is informative. Features are $z$-scored genotypes. We use
$4000$ train / $2000$ test, $45$ epochs, $3$ seeds, and report balanced accuracy. The Bayes
ceiling under label noise $\nu$ is $1{-}\nu$ ($1.00/0.90/0.80$ at $\nu{=}0.0/0.1/0.2$).

\paragraph{Application baselines and support recovery.} \emph{Linear}: logistic regression on
the $n$ genotypes. \emph{Pairwise}: a degree-2 factorization machine (rank $23$, $289$ params).
\emph{HOFM}: degree-5 higher-order FM via ANOVA kernel (rank $6$, $451$ params). \emph{MLP}:
one hidden layer of width $64$ ($9282$ params). \emph{QHA}: the same head as above, $L{=}3$
($296$ params). \emph{MDR} (field standard): for every $\binom{n}{k}$ locus subset we build the
$3^k$ genotype-cell contingency table, label each cell by its training majority, and score the
held-out accuracy; the best-scoring subset is the prediction and its native support estimate.
Support recovery for the learned models uses gradient attribution: per-locus importance
$s_i=\operatorname{mean}_x|\partial(\text{logit gap})/\partial x_i|$ on $1000$ held-out inputs,
ranked top-$k$. We report exact-recovery rate (top-$k$ equals the true set) and precision@$k$.

\paragraph{Application data --- LPN and triangle.} \emph{LPN} (learning parity with noise):
$X\in\{-1,+1\}^n$, $n{=}12$, label is the parity over a hidden $k{=}4$ support flipped with
probability $0.05$; $4000$/$2000$ split, disjoint rows. \emph{Triangle detection}: graphs on
$6$ vertices encoded by their $\binom{6}{2}{=}15$-bit edge-indicator vector with a
\textbf{fixed edge count} (so the linear ``number of edges'' feature is uninformative); the
label is whether the edge set contains a $3$-clique --- a product of three edge variables.
Both use the same model suite and training protocol as the epistasis runs.

\paragraph{Noise sweep.} Table~\ref{tab:noise} reports the $k{=}3$ epistasis sweep over label
noise. Every high-order detector (MDR, HOFM, MLP, QHA) tracks the Bayes ceiling $1{-}\nu$ as
noise rises, while the linear and pairwise models stay at chance throughout --- the separation
is a property of interaction order, not of the noise level.

\begin{table}[t]\centering
\caption{Epistasis noise sweep at $k{=}3$ ($n{=}12$, $3$ seeds): balanced accuracy vs.\ label
noise $\nu$. High-order detectors track the Bayes ceiling $1{-}\nu$; low-order models stay at
chance. Support recovery (exact / prec@$k$) is unchanged across $\nu$ for the high-order
detectors.}
\label{tab:noise}
\begin{tabular}{lcccccc}
\toprule
$\nu$ & linear & pairwise & MDR & HOFM & MLP & \textbf{QHA} \\
\midrule
$0.0$ & 0.51 & 0.52 & 1.00 & 1.00 & 1.00 & \textbf{1.00} \\
$0.1$ & 0.52 & 0.52 & 0.90 & 0.90 & 0.90 & \textbf{0.90} \\
$0.2$ & 0.51 & 0.51 & 0.80 & 0.79 & 0.79 & \textbf{0.80} \\
\bottomrule
\end{tabular}
\end{table}

\paragraph{Compute.} Circuit simulations run on CPU (PennyLane \texttt{default.qubit}); a
single $n{=}12$, $50$-epoch QHA run takes $\approx 35$ minutes. No quantum speedup is claimed
(Sec.~\ref{sec:dequant}); simulation cost is incurred only for training/validation at scale.

\section{Detailed proofs}\label{app:proofs}
This appendix is written to be self-contained for an ML audience: each result is preceded by
a short primer on the (non-ML) tool it uses. Throughout, a single classical self-attention
layer is the model class $T^{1,H}_{1,m,1,p}$ of \citet{sanford2023representational}: one layer, $H$ heads,
embedding dimension $m$, $p$-bit fixed-point arithmetic, on a length-$N$ input; we write
$\mathrm{budget}=mHp$ for its total ``size''.

\paragraph{Primer: Boolean functions, Fourier degree, and interaction order.}
Any function $f:\{-1,+1\}^n\!\to\!\mathbb{R}$ has a unique multilinear (Fourier) expansion
$f(x)=\sum_{S\subseteq[n]}\hat f(S)\,\chi_S(x)$, where $\chi_S(x)=\prod_{i\in S}x_i$ is the
parity (monomial) over the subset $S$ and $\hat f(S)\in\mathbb{R}$ are the Fourier
coefficients. The \emph{degree} of $f$ is the largest $|S|$ with $\hat f(S)\neq0$; we call
this the \emph{interaction order} because $\chi_S$ couples exactly the $|S|$ tokens in $S$
multiplicatively. Dot-product attention scores a pair $(i,j)$ via $q_i^\top k_j$, a degree-2
form in the token embeddings, so a single layer is naturally order-2; the question is what it
costs to reach order $k>2$. The hidden-subset parity used in our experiments, $\chi_S$ with
$|S|=k$, is the function with \emph{all} Fourier mass on a single degree-$k$ coefficient ---
the cleanest order-$k$ target.

\subsection{Lemma~\ref{lem:matchk}: order-$k$ classical lower bound}
\paragraph{Primer: communication complexity (the proof tool).}
In two-party communication complexity, Alice holds $a$ and Bob holds $b$ and they must
compute $g(a,b)$ exchanging as few bits as possible; the minimum worst-case number of bits is
the communication complexity of $g$. The canonical hard problem is \textsc{set-disjointness}
$\mathrm{DISJ}_N$: $a,b\subseteq[N]$ and $g=1$ iff $a\cap b=\varnothing$; it requires
$\Omega(N)$ bits (Yao), and its $r$-party number-in-hand version requires $\Omega(N/2^{O(r)})$.
The proof strategy is the standard ``simulation'' argument: (i) show a small attention layer
induces a cheap communication protocol; (ii) show the target task requires expensive
communication; (iii) conclude the layer cannot be small.

\begin{proof}[Proof of Lemma~\ref{lem:matchk}]
\textbf{Step 1 (a small attention layer $\Rightarrow$ a cheap protocol).} Partition the $N$
input tokens among the players. \citet{sanford2023representational} show that any $f\in T^{1,H}_{1,m,1,p}$ can
then be evaluated by a protocol exchanging only $O(mHp\cdot\log\log N)$ bits: each head's
softmax aggregation, at $p$-bit precision, is summarised by $O(m)$ numbers per token, so a
player can transmit its head contributions in $O(mp)$ bits, and there are $H$ heads. The bit
count is thus controlled by the layer's $\mathrm{budget}=mHp$, independent of which function is
computed.

\textbf{Step 2 (the order-$k$ task is expensive).} Consider the order-$k$ matching predicate
$\Match_k(X)_i=\mathbb{1}\{\exists\,j_1,\dots,j_{k-1}:\,x_i+\sum_{t}x_{j_t}\equiv 0\ (\mathrm{mod}\
M)\}$ with $M=N^{\Theta(1)}$ (for $k{=}2$ this is the pairwise ``does some other token complete
$x_i$ to $0$''; for general $k$ it asks for a matching $(k{-}1)$-tuple). Split the tokens into
$k{-}1$ blocks, one per player. A player can privately offset its block by a shared mask so
that ``a matching $(k{-}1)$-tuple, one element per block, sums to $0$'' becomes exactly ``the
$k{-}1$ sets intersect'' --- i.e.\ an instance of $(k{-}1)$-party set-disjointness
$\mathrm{DISJ}^{k-1}_N$ (the $k{=}3$ case is the two-party reduction of \citealp{sanford2023representational},
Thm~7). Hence computing $\Match_k$ solves $\mathrm{DISJ}^{k-1}_N$.

\textbf{Step 3 (combine).} $\mathrm{DISJ}^{k-1}_N$ needs $\Omega(N/2^{O(k)})$ bits. Matching
this against the $O(mHp\log\log N)$ bits the layer affords (Step 1) gives
\[
\mathrm{budget}=mHp=\Omega\!\Big(\tfrac{N}{2^{O(k)}\log\log N}\Big),
\]
which for every \emph{fixed} $k$ is $\Omega(N/\log\log N)$. So no single attention layer with
$mHp=o(N/\log\log N)$ computes $\Match_k$ for $k\ge3$; the $k=3$ case recovers
\citet{sanford2023representational}, Thm~7. \end{proof}

\noindent\emph{Remark (honest scope).} The only non-textbook ingredient is the multiparty
disjointness bound, whose constant degrades as $2^{-O(k)}$; for the constant-$k$ regime of our
experiments this is immaterial, but removing it to get a bound uniform in $k$ is left open
(Sec.~7). The lemma is a \emph{single-layer} statement; multi-layer lower bounds for the same
task are conjectural \citep{sanford2023representational,sanford2024transformers}, and we do not claim them.

\subsection{Theorem~\ref{thm:sep}: the quantum construction}
\paragraph{Primer: the Heisenberg picture (why we track observables).}
An expectation value can be computed by evolving the \emph{state} forward through the gates,
or equivalently by evolving the \emph{measured observable} backward: for a unitary $U$,
$\langle\psi|U^\dagger Z U|\psi\rangle$, so we may ask how $U^\dagger Z U$ looks. This
``Heisenberg picture'' is convenient because Clifford gates map a Pauli operator to another
(signed) Pauli. In particular $\mathrm{CNOT}_{i\to r}$ (control $i$, target $r$) maps the
target's $Z_r\mapsto Z_rZ_i$. Accumulating onto a read-out qubit $r$ therefore turns
$\langle Z_r\rangle$ into a \emph{product} of input bits --- a parity; choosing $r$
\emph{inside} the support (below) makes this product exactly $\chi_S$, with no ancilla.

\begin{proof}[Proof of the quantum direction]
We give an explicit parameter setting of one QHA head whose read-out $\langle Z_r\rangle$ on a
support qubit $r\in S$ equals the order-$k$ monomial $\chi_S(X)=\prod_{i\in S}x_i$ for any
$|S|=k$, on computational-basis inputs $x_i\in\{-1,+1\}$ (identify $+1\!\leftrightarrow\!|0\rangle$,
$-1\!\leftrightarrow\!|1\rangle$; the encoder's $R_Z$ angles are set so that qubit $w$ carries
$x_w$ in the computational basis).

\emph{What computes the parity.} Choose the read-out qubit to be a \emph{member of the
support}, $r\in S$ (e.g.\ $r=\min S$), and apply $\mathrm{CNOT}_{i\to r}$ for each
$i\in S\setminus\{r\}$; then measure $Z_r$. By the primer, after these $k-1$ CNOTs the
observable has become $Z_r\mapsto Z_r\prod_{i\in S\setminus\{r\}}Z_i=\prod_{i\in S}Z_i$, so on a
basis state $\langle Z_r\rangle=\prod_{i\in S}x_i=\chi_S(X)$, as claimed. Because $r$ is itself
in $S$, its own bit $x_r$ supplies the remaining factor: \emph{no ancilla is introduced} (the
read-out is one of the $n$ token qubits) and no spurious factor appears. Equivalently, qubit
$r$ accumulates the XOR (parity) of the support bits.

\emph{Why this lives in the QHA gate set.} Each CNOT is realised from QHA's gates
($H$, $R_Z$, $R_{ZZ}$, $R_Y$) via
\[
\mathrm{CNOT}_{i\to r}=(I\otimes H)\,\mathrm{CZ}_{i,r}\,(I\otimes H),\qquad
\mathrm{CZ}_{i,r}=e^{i\pi/4}\,R_{Z}^{(i)}(-\tfrac{\pi}{2})\,R_{Z}^{(r)}(-\tfrac{\pi}{2})\,
R_{ZZ}^{(i,r)}(\tfrac{\pi}{2}),
\]
and the Hadamard is obtained from the per-qubit $R_Y$ layer ($H=R_Y(\tfrac{\pi}{2})\,Z$, with
the global phase and Pauli-$Z$ absorbed into adjacent single-qubit rotations). Because the QHA
entangler applies $R_{ZZ}$ on \emph{every} pair, all pairs $(i,r)$, $i\in S\setminus\{r\}$, are
available.

\emph{Precise scope: architecture class vs.\ trained encoder.} We state exactly what is and is
not proved. Theorem~\ref{thm:sep} is a statement about the QHA \emph{architecture class}:
$n$ qubits (one per token, no ancilla), the gate set $\{H,R_Z,R_{ZZ},R_Y\}$, and a single-qubit
$\langle Z_r\rangle$ read-out. The witness above is a concrete circuit \emph{in this class} ---
every gate it uses ($H$, fixed-angle $R_Z$, and the CNOTs compiled from $R_{ZZ}$ and
single-qubit rotations as displayed) belongs to the set, at depth $O(\log k)$ with $O(k)$
two-qubit gates --- whose $\langle Z_r\rangle$ \emph{exactly} equals the Boolean monomial
$\chi_S$. It loads the binary inputs in the computational basis. The \emph{trained} head injects
inputs through the re-uploading rotation $R_Z(\theta^{\text{enc}}_{\ell w}x_w)$ rather than basis
loading; we do \emph{not} claim the two encoders are gate-for-gate identical. What is proved is
\emph{representability}: an order-$k$ read-out is realizable within the QHA architecture class at
logarithmic depth (Thm.~\ref{thm:sep}) while no sub-quadratic classical attention layer realizes
it (Lemma~\ref{lem:matchk}). That the re-uploading instance \emph{attains} this order is
consistent with re-uploading raising the reachable Fourier degree to $k$
\citep{schuld2021effect} and is what Sec.~\ref{sec:exp} confirms empirically --- the role we
assign to the experiments.

\emph{Resource count.} Accumulating the parity with a balanced binary tree of CNOTs onto
the support qubit $r$ (rather than a chain) uses $O(k)$ two-qubit gates at circuit depth $O(\log k)$. Each
tree level is a set of \emph{disjoint} (commuting) $R_{ZZ}$ couplings, which one QHA block's
all-to-all entangler realizes in parallel; hence the construction needs only $O(\log k)$ QHA
blocks (not $O(k)$). So a single QHA head \emph{represents} every order-$k$ monomial on a
single-qubit read-out, with $n$ qubits, depth $O(\log k)$ and $O(k)$ two-qubit gates. Combined with
Lemma~\ref{lem:matchk} (representation) and Prop.~\ref{prop:norm} (learning), no single
classical attention layer of budget $o(N/\log\log N)$ matches this, giving the separation.
\end{proof}

\subsection{Proposition~\ref{prop:norm}: fixed-budget generalization separation}
\paragraph{Primer: why norm controls generalization.}
For bounded-norm models, the generalization gap (test minus train error) is controlled by a
capacity measure (Rademacher complexity / covering number) that grows with the weight norm;
schematically, $\text{gap}\lesssim \widetilde{O}(\text{norm}/\sqrt{m})$ for $m$ samples. So if
representing a target \emph{requires} a large norm, a fixed-budget model cannot generalize it
from limited data --- it would need either more parameters/norm or far more samples.

\begin{proof}[Proof of Prop.~\ref{prop:norm}]
\textbf{Classical side.} \citet{edelman2022inductive} (Prop.~5.1, Lem.~B.2--B.3) show a bounded-norm
self-attention head representing a \emph{generic} $s$-sparse Boolean function (unknown
support, e.g.\ parity of an unknown $s$-subset) needs combined weight norm $2^{\Omega(s)}$ and
$\Omega(s)$ heads; only symmetric/low-degree special cases are cheap. Their covering-number
bound (Thm.~4.2/Cor.~4.5) gives Rademacher complexity $\widetilde{O}(\text{norm})$, hence
$\text{gap}=\widetilde{O}(\text{norm}/\sqrt m)$. A \emph{fixed}-budget head (fixed norm,
$O(1)$ heads) therefore cannot drive the gap to $o(1)$ on the order-$k$ rule once
$2^{\Omega(k)}$ exceeds its budget --- it cannot generalize from a strict subset of the cube,
which is exactly our disjoint-split protocol.

\textbf{Quantum side.} Data re-uploading raises the reachable Fourier degree with circuit
depth \citep{schuld2021effect}; concretely, the explicit construction of Thm.~\ref{thm:sep}
realizes the degree-$k$ monomial at depth $O(\log k)$ with $O(k)$ two-qubit gates and
$\mathrm{poly}(n)$ parameters, and crucially with \emph{no} $2^{\Omega(k)}$ norm blow-up. Thus
at a fixed polynomial budget QHA generalizes the order-$k$ rule, for $k$ beyond the classical
attention budget.

\textbf{Fourier-degree refinement (matches the experiments).} The classical bound bites in
proportion to the target's \emph{high-degree} Fourier mass. Parity puts all mass at degree
$k$ (worst case $\Rightarrow$ maximal gap, Fig.~\ref{fig:sep} left). A balanced random
$k$-junta spreads mass over degrees $\le k$; its low-degree part is learnable by the
low-degree-biased head \citep{bhattamishra2023simplicity}, so the classical model keeps up until $k$ is
large (Fig.~\ref{fig:sep} right). This is precisely the empirical trend. \end{proof}

\subsection{Theorem~\ref{thm:bp}: absence of barren plateaus}
\paragraph{Primer: what a barren plateau is.}
A variational circuit has a \emph{barren plateau} if the cost gradient's variance vanishes
exponentially in the number of qubits, $\mathrm{Var}_\theta[\partial_\mu C]=2^{-\Omega(n)}$;
then gradients are exponentially small almost everywhere and training is infeasible at scale.
\citet{cerezo2021cost} prove this is \emph{avoided} when two conditions hold: the circuit is
\emph{shallow} (depth $O(\log n)$) and the cost is built from \emph{local} observables (acting
on $O(1)$ qubits) rather than a global one. QHA's read-out $\langle Z_w\rangle$ is $1$-local
by construction, which is the key enabling property.

\begin{proof}[Proof of Thm.~\ref{thm:bp}]
Take the QHA variant whose entangling layers form an alternating block ansatz of local
$2$-designs on bounded-range qubits, at depth $L=O(\log n)$, with the $1$-local cost
$C(\theta)=\langle Z_w\rangle$. These are exactly the hypotheses of \citet{cerezo2021cost},
Thm~2, whose conclusion is an inverse-polynomial lower bound on the gradient variance,
\[
\mathrm{Var}_\theta[\partial_\mu C]\ \ge\ \Omega\!\big(1/\mathrm{poly}(n)\big),
\]
so there is no barren plateau: gradients shrink only polynomially, and the head is trainable
at scale. The locality of the read-out is what places the cost in the regime where this lower
bound (rather than the exponentially small global-cost bound) applies. \end{proof}

\noindent\emph{Scope (stated honestly).} Thm.~\ref{thm:bp} covers the shallow
local-$2$-design variant, to which \citet{cerezo2021cost} applies rigorously. The all-to-all
$R_{ZZ}$ entangler used in the experiments is more expressive and is \emph{not} a
geometrically local $2$-design, so the theorem does not directly certify it; we instead
\emph{observe} its trainability (order-2 parity fit within $4$ epochs; no plateau across
$k\le6$ at $n\le14$, Fig.~\ref{fig:analysis}e). As Sec.~\ref{sec:dequant} discusses, the same
locality/shallowness that guarantees trainability also tends to imply classical simulability
\citep{cerezo2025does}; we therefore present trainability as a \emph{theorem for the benign
variant} and an \emph{empirical observation for the expressive one}, and make no
exponential-speedup claim.

\end{document}